\newcommand{\isep}{\!\mathrel{{.}{.}\!}\nobreak}
\DeclareSymbolFont{cyrletters}{OT2}{wncyr}{m}{n}
\DeclareMathSymbol{\Sha}{\mathalpha}{cyrletters}{"58}
\newtheorem{theorem}{Theorem}
\newtheorem{lemma}{Lemma}
\newtheorem{definition}{Definition}
\newcommand{\Fq}{\mathbb{F}_q}
\newcommand{\R}{\mathbb{R}}
\newcommand{\E}{\mathbb{E}}
\newcommand{\A}{\mathbf{A}}
\newcommand{\tE}{\tilde{\E}}
\newcommand{\bC}{\mathbb{C}}
\newcommand{\db}{d^\bot}
\newcommand{\Op}{\Omega_p}
\newcommand{\OpI}{\Omega_{p,I}}
\newcommand{\om}{\omega}
\newcommand{\Om}{\Omega}
\newcommand{\C}{\mathcal{C}}
\newcommand{\D}{\mathcal{D}}
\newcommand{\G}{\mathcal{G}}
\newcommand{\wt}{\mathrm{wt}}
\newcommand{\tr}{\mathrm{Tr}}
\newcommand{\wG}{\widetilde{\mathcal{G}}}
\newcommand{\hG}{\mathcal{G}_{\C_i,I}}
\newcommand{\wM}{\widetilde{M}_{\C_i}}
\newcommand{\SC}{M_\mathrm{SC}}
\newcommand{\MP}{M_{\mathrm{MP},y}}
\newcommand{\bg}{\mathbf{g}}
\newcommand{\bz}{\mathbf{0}}
\newcommand{\bo}{\mathbf{1}}
\renewcommand{\l}{\ell}
\newcommand{\ga}{\gamma}
\newcommand{\Ga}{\Gamma}
\newcommand{\ep}{\epsilon}
\newcommand{\Vg}{V_\ga}
\newcommand{\vg}{v_\ga}
\newcommand{\Vgu}{V_{\ga_1,\ga_2}}
\newcommand{\vgu}{v_{\ga_1,\ga_2}}
\newcommand{\Vgi}{V_{\ga_1\cap\ga_2}}
\newcommand{\vgi}{v_{\ga_1\cap\ga_2}}
\newcommand{\vgii}{v_{\ga_1,\ga_2}}
\newcommand{\Wg}{W_\ga}
\newcommand{\Wgu}{W_{\ga_1,\ga_2}}
\newcommand{\Wguu}{W^{\ga_1,\ga_2}}
\newcommand{\Og}{\Omega(\Vg)}
\newcommand{\OIg}{\Omega_I(\Vg)}
\newcommand{\OIgg}{\Omega_I(V_{\ga_1,\ga_2})}
\newcommand{\Ogg}{\Omega (V_{\ga_1,\ga_2})}
\newcommand{\AIl}{A_{\l,I}}
\newcommand{\la}{\lambda}
\newcommand{\Sp}{\Sigma_p}
\newcommand{\tb}{\textbf}
\newcommand{\all}{\forall}
\newcommand{\de}{\delta}
\newcommand{\nono}{\nonumber}
\newcommand{\gT}{\ga_T}
\newcommand{\setm}{\setminus}
\newcommand{\Var}{\mathrm{Var}}
\title{Random Matrices from Linear Codes and Wigner's semicircle law}
\author{Chin Hei Chan\thanks{C. Chan is at the Dept. of Mathematics, Hong Kong University of Science and Technology, Clear Water Bay, Kowloon, Hong Kong (email: chchanam@connect.ust.hk).}, Enoch Kung\thanks{E. Kung is at the Clinical Operational Research Unit, Dept. of Mathematics,
Faculty of Maths \& Physical Sciences, University College London, UK (email: e.kung@ucl.ac.uk).} and Maosheng Xiong\thanks{M. Xiong is at the Dept. of Mathematics, Hong Kong University of Science and Technology, Clear Water Bay, Kowloon, Hong Kong (email: mamsxiong@ust.hk). }}
\begin{document}
\maketitle
\begin{abstract}
In this paper we consider a new normalization of matrices obtained by choosing distinct codewords at random from linear codes over finite fields and find that under some natural algebraic conditions of the codes their empirical spectral distribution converges to Wigner's semicircle law as the length of the codes goes to infinity. One such condition is that the dual distance of the codes is at least 5. This is analogous to previous work on the empirical spectral distribution of similar matrices obtained in this fashion that converges to the Marchenko-Pastur law.
\end{abstract}
\begin{keywords}
Group randomness, linear codes, dual distance, empirical spectral distribution, Marchenko-Pastur law, Wigner's semicircle law, random matrix theory.
\end{keywords}

\section{Introduction}


The theory of random matrices mainly concerns the statistical behavior of eigenvalues of large random matrices arising from various matrix models. There is a universality phenomenon that, like the law of large numbers in probability theory, the collective behavior of eigenvalues of a large random matrix does not depend on the distribution details of entries of the matrix. Partly because of this reason, originated from statistics \cite{Wis} and mathematical physics \cite{Wig} and nurtured by mathematicians, the random matrix theory has found important applications in many diverse disciplines such as number theory \cite{MET}, computer science, economics and communication theory \cite{TUL} and remains a prominent research area.

Most of the matrix models considered in the literature were matrices whose entries have independent structures. In a series of work (\cite{Babadi1,Babadi2,Xia}), initiated in \cite{Babadi0}, the authors studied matrices formed from linear codes over finite fields and ultimately proved that they behave like truly random matrices (i.e., random matrices with i.i.d. entries) in terms of the empirical spectral distribution, if the minimum Hamming distance of the dual codes is at least 5. This is the first result relating the randomness of matrices from linear codes to the algebraic properties of the underlying dual codes, and can be interpreted as a joint randomness test for codes or sequences. This is called a ``group randomness'' property \cite{Babadi0} and may have many applications.

In this paper we study a new group randomness property of linear codes. To describe our results, we need some notation.


Let $\mathscr{C}=\{\C_i : i \ge 1\}$ be a family of linear codes of length $n_i$, dimension $k_i$ and minimum Hamming distance $d_i$ over the finite field $\Fq$ of $q$ elements ($\C_i$ is called an $[n_i,k_i,d_i]_q$ code for short). Assume that $n_i \to \infty$ as $i \to \infty$. The standard additive character on the finite field $\Fq$ extends component-wise to a natural mapping $\ep: \Fq^{n} \to \bC^{n}$. For each $i$, choosing $p_i$ codewords at random uniformly from $\C_i$ and applying the mapping $\ep$, we obtain a $p_i \times n_i$ random matrix $\Phi_{\C_i}$. The Gram matrix of $\frac{1}{\sqrt{n_i}}\Phi_{\C_i}$ is
\[\G_{\C_i}:=\frac{1}{n_i} \Phi_{\C_i} \Phi_{\C_i}^*,\]
here $\Phi_{\C_i}^*$ denotes the conjugate transpose of $\Phi_{\C_i}$. Denote by $\E$ the expectation with respect to the probability space.

For any $n \times n$ matrix $\A$ with eigenvalues $\lambda_1,\ldots,\lambda_n$, the \emph{spectral measure} of $\A$ is defined by
$$\mu_\A=\frac{1}{n}\sum_{j=1}^n \de_{\la_j},$$
where $\de_{\lambda}$ is the Dirac measure at the point $\lambda$. The \emph{empirical spectral distribution} of $\A$ is defined as
$$M_\A(x):=\int_{-\infty}^x \mu_\A(\text{d}x).$$
For the sake of brevity, a slightly simplified version of \cite[Theorem 1]{Xia} may be stated as follows.

\begin{theorem} \label{thm:xia}
Let $M_{\C_i}(x)$ be the empirical spectral distribution of the Gram matrix $\G_{\C_i}$. If the dual distance of the code $\C_i$ satisfies $\db_i \geq 5$ for each $i$ and $y=\frac{p_i}{n_i} \in (0,1)$ is fixed, then for any $x \in \R$, we have
\begin{equation} \label{1:xia}
\lim_{n_i \to \infty} \E M_{\C_i}(x) = \MP(x).
\end{equation}
\end{theorem}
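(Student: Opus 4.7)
My plan is to proceed by the method of moments. Because the Marchenko-Pastur distribution $\MP$ with aspect ratio $y \in (0,1)$ is compactly supported, it is uniquely determined by its moments, and convergence of the moments of $\E M_{\C_i}$ to those of $\MP$ implies convergence of the distribution functions at every continuity point of $\MP$, i.e.\ at every $x \in \R$. Thus it suffices to prove that for every fixed integer $k \geq 1$,
$$m_k(i) \;:=\; \int x^k \, d(\E M_{\C_i})(x) \;=\; \frac{1}{p_i}\, \E\, \tr(\G_{\C_i}^k)$$
converges as $n_i \to \infty$ to the $k$-th moment of $\MP$.

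Writing the sampled codewords as $c_{(1)},\ldots,c_{(p_i)}$ and using $(\G_{\C_i})_{ab} = \frac{1}{n_i}\sum_{j=1}^{n_i} \ep(c_{(a),j}-c_{(b),j})$, the trace expands as
\begin{equation*}
\E\,\tr(\G_{\C_i}^k) \;=\; \frac{1}{n_i^k}\! \sum_{\substack{a_1,\ldots,a_k\in[p_i]\\ j_1,\ldots,j_k\in[n_i]}}\! \E\, \ep\!\left(\sum_{\ell=1}^k\bigl(c_{(a_\ell),j_\ell}-c_{(a_{\ell+1}),j_\ell}\bigr)\right),
\end{equation*}
with indices cyclic. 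Up to an $O(1/|\C_i|)$ correction from sampling without replacement, the expectation factorizes over the distinct codewords appearing in $(a_1,\ldots,a_k)$ and reduces at each codeword index $a$ to $\E_{c \in \C_i}\,\ep(\langle v_a, c\rangle)$, where $v_a \in \Fq^{n_i}$ has support among the $j_\ell$'s incident to $a$ with coefficients determined by the walk. Delsarte's theorem translates $\db_i \geq 5$ into $\C_i$ being an orthogonal array of strength $4$, so this character sum equals $1$ when $v_a=\bz$ and $0$ whenever $1 \leq \wt(v_a) \leq 4$.

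I would then carry out the standard Marchenko-Pastur moment combinatorics. Classifying the index tuples by the bipartite multigraph they induce on row/column indices, the leading configurations are those in which each position index $j_\ell$ appears with multiplicity two and the walk traces a planar rooted tree (the double-tree picture); these force $v_a = \bz$ at every $a$, contribute $\Theta(p_i)$ after normalization, and their count reproduces the $k$-th moment of $\MP$. The main obstacle is showing that every non-leading configuration contributes $o(p_i)$. The hypothesis $\db_i \geq 5$ is used precisely to kill, via the vanishing character sum, any configuration producing a nonzero $v_a$ of weight $\leq 4$; this removes the bulk of subleading terms. What remains are high-weight ``defect'' configurations, surviving only for combinatorially rigid index patterns whose count grows strictly slower in $n_i$. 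The delicate step is a uniform bound showing that the total contribution of these defects is negligible as $n_i \to \infty$.
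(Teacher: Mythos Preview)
Your proposal is essentially the same moment-method argument the paper outlines in Section~\ref{sec:2}; the paper phrases your character-sum step as counting tuples $(t_0,\dots,t_{\l-1})\in[1\isep n]^\l$ solving a system in the generator-matrix columns, and your ``delicate step'' is exactly Lemma~\ref{le:0} (proved in \cite{Xia}), which gives the clean dichotomy $W_\gamma=n^{\l-v_\gamma+1}$ for double trees and $O(n^{\l-v_\gamma})$ otherwise. One small correction: in the setup of Theorem~\ref{thm:xia} the codewords are drawn with replacement ($\Omega_p$ is all maps), so your $O(1/|\C_i|)$ correction is unnecessary and the factorization over distinct row indices is exact.
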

\noindent Here $\MP(x)$ denotes the cumulative distribution function of the Marchenko-Pastur measure whose density function is given by
\begin{equation*} \label{eq:MP}
\rho_{_{\mathrm{MP},y}}(x):=\frac{1}{2\pi xy}\sqrt{(b-x)(x-a)}\bo_{[a,b]}(x),
\end{equation*}
where $a=(1-\sqrt{y})^2, b=(1+\sqrt{y})^2$, and $\bo_{[a,b]}$ is the indicator function of the interval $[a,b]$.

It is well-known in random matrix theory that, if $X_n$ is a $p \times n$ matrix whose entries are i.i.d. random variables of zero mean and unit variance, the empirical spectral distribution of the Gram matrix of $\frac{1}{\sqrt{n}}X_n$ satisfies the same Marchenko-Pastur law (\ref{1:xia}) as $n \to \infty$ and $y=\frac{p}{n}$ is fixed (see \cite{RM,MP}), hence the above result can be interpreted as that matrices formed from linear codes of dual distance at least 5 behave like truly random matrices of i.i.d. entries. In other words, sequences from linear codes of dual distance at least 5 possess a group randomness property. The condition $\db_i \geq 5$ is also necessary, because the empirical spectral distribution of matrices formed from the first-order Reed-Muller codes whose dual distance is 4 behave very differently from the Marchenko-Pastur law (\cite{Babadi0}).

In this paper we consider a different group randomness property. If $X_n$ is a $p \times n$ random matrix whose entries are i.i.d. random variables of zero mean and unit variance, let $G_n:=\frac{1}{n}X_nX_n^*$, it is well-known in random matrix theory (\cite{RM,Bao}) that in the limit $n,p,\frac{n}{p} \to \infty$ simultaneously, the empirical spectral distribution of the matrix $G_{n,I}:=\sqrt{\frac{n}{p}}(G_n-I_p)$ converges to Wigner's semicircle law $\SC(x)$ whose density function is given by
\begin{equation*} \label{eq:SC}
\rho{_{_\mathrm{SC}}}(x):=\frac{1}{2\pi}\sqrt{4-x^2}\cdot \bo_{[-2,2]}(x).
\end{equation*} Here $I_{p}$ denotes the identity matrix of size $p$. So a natural question is to investigate when similarly formed matrices from linear codes $\C_i$ satisfy the same property. For this purpose, we consider the $p_i \times n_i$ random matrix $\widetilde{\Phi}_{\C_i}$ obtained by choosing $p_i$ distinct codewords at random uniformly from $\C_i$ and by applying the mapping $\ep$. Define
\begin{equation*} \label{1:gc2} \G_{\C_i,I}:=\sqrt{\frac{n_i}{p_i}}(\wG_{\C_i}-I_{p_i}).
\end{equation*}

Now we state the main result of this paper.
\begin{theorem} \label{thm:SC}
Let $\wM(x)$ be the empirical spectral distribution of the matrix $\hG$. Assume that the linear codes $\C_i$ satisfy:

(i) $\frac{N_i}{n_i} \to \infty$ as $i \to \infty$, where $N_i=q^{k_i}$ is the cardinality of the code $\C_i$;

(ii) $\db_i \geq 5$ for each $i$, and

(iii) there is a fixed constant $c > 0$ independent of $i$ such that
\begin{equation} \label{1:srip}
|\langle v,v' \rangle| \leq c\sqrt{n_i}, \quad \mbox{ for any } v \ne v' \in \ep(\C_i).
\end{equation}
Here $\langle v,v' \rangle$ is the standard inner product of the complex vectors $v$ and $v'$. Then as $n_i,p_i, \frac{n_i}{p_i} \to \infty$ simultaneously, for any $x \in \R$, we have
\begin{eqnarray*} \label{1:conv-P} \wM(x) &\to & \SC(x) \quad \emph{ in Probability}. \end{eqnarray*}
\end{theorem}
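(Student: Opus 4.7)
The strategy is the classical method of moments, adapted to this algebraic setting. Set $M := \wG_{\C_i} - I_{p_i}$ so that $\hG = \sqrt{n_i/p_i}\,M$. Because $|\ep(c)|^2 = n_i$ for every codeword $c$, the diagonal entries of $M$ vanish; the off-diagonal entries are $M_{ab} = \tfrac{1}{n_i}\langle v_a, v_b\rangle$ where $v_a = \ep(c_a)$, and by hypothesis (iii) they satisfy $|M_{ab}| \leq c/\sqrt{n_i}$. It suffices, by Carleman's condition (the semicircle distribution is determined by its moments), to prove that for every fixed $k \geq 1$, $\tfrac{1}{p_i}\tr(\hG^k)$ converges in probability to $m_k^{\mathrm{SC}}$, the $k$th moment of $\SC$, where $m_{2m}^{\mathrm{SC}} = C_m$ (the $m$th Catalan number) and odd moments vanish.

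\textbf{Step 1 (Expected moments).} Expand
\[
\tr(\hG^k) = \bigl(\tfrac{n_i}{p_i}\bigr)^{k/2}\sum_{(i_1, \ldots, i_k)} M_{i_1 i_2} M_{i_2 i_3}\cdots M_{i_k i_1},
\]
the sum running over closed walks of length $k$ with $i_j \neq i_{j+1}$. Writing $M_{ab} = \tfrac{1}{n_i}\sum_{l}\psi((c_a)_l - (c_b)_l)$ introduces an auxiliary sum over coordinate sequences $(l_1, \ldots, l_k)$, and the expectation over the distinct random codewords groups naturally by the distinct vertex labels appearing in the walk. Condition (i) combined with $p_i \ll n_i \ll N_i$ lets me replace sampling without replacement by i.i.d.\ sampling up to a factor $1 + o(1)$, since only $O(k)$ codewords are involved in each walk.

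\textbf{Step 2 (Leading term).} Under i.i.d.\ sampling and condition (ii), the projection of a uniform codeword onto at most four coordinates is uniform on $\Fq^4$, so each per-codeword character expectation vanishes unless the characters cancel coordinate-wise. The surviving configurations are the ``doubled tree walks'' of length $k = 2m$: closed walks tracing each edge of an $(m{+}1)$-vertex labeled tree exactly twice, with the two traversals sharing the same coordinate $l$. The number of such walks is $C_m \cdot p_i(p_i - 1)\cdots(p_i - m)$, and each contributes $\tfrac{1}{n_i^{2m}}\cdot n_i^m$, so the total after the $(n_i/p_i)^m/p_i$ prefactor becomes exactly $C_m$. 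For odd $k$ the $2m+1$ edge-traversals cannot be paired, so the analogous leading term vanishes.

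\textbf{Step 3 (Error control).} Three classes of non-leading walks must be shown to be $o(1)$: (a) walks whose underlying multigraph is not a tree with $k/2{+}1$ vertices, which contain too few distinct codeword labels to match the $p_i^{k/2+1}$ normalization; (b) doubled tree walks with mismatched coordinates $l_j \neq l_{j'}$ on some paired edge, killed by dual-distance cancellation whenever at most four coordinates per codeword are involved; and (c) walks in which some codeword is touched at more than four distinct coordinates, outside the scope of dual distance. For class (c) one discards the characters and uses $|M_{ab}| \leq c/\sqrt{n_i}$ from condition (iii), then counts the walks combinatorially. In every case the total contribution is bounded by $O(p_i/n_i) + O(1/p_i)$ by the walk count estimates, which tends to zero by the hypothesis $n_i,p_i,n_i/p_i \to \infty$.

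\textbf{Step 4 (Variance and conclusion).} To upgrade expectation convergence to convergence in probability, I compute $\Var\bigl(\tfrac{1}{p_i}\tr(\hG^k)\bigr)$ by expanding $\tr(\hG^k)^2$ as a pair of closed walks. Disconnected pairs of walks factor and cancel the $(\E\tr)^2$ term up to $o(p_i^2)$, while connected pairs involve $2k$ edges on at most $2k$ distinct vertices, yielding variance $O(1/p_i)$ by the same combinatorial bookkeeping as in Step 3. Chebyshev's inequality then gives the required convergence in probability of every moment, hence the weak convergence $\wM(x) \to \SC(x)$ in probability at every $x \in \R$ (the semicircle CDF being continuous).

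\textbf{Main obstacle.} The delicate part is Step 3: the $\sqrt{n_i/p_i}$ amplification magnifies every walk by a factor $n_i^{k/2}$, so non-leading classes can only be defeated by combining dual-distance cancellation (for walks with few coordinates per codeword, where character expectations vanish) with the uniform entry bound from condition (iii) (for walks with many coordinates per codeword, where dual distance is silent). Partitioning the walks according to the relevant per-codeword coordinate-multiset size and proving that every partition class admits one of the two bounds is where all three hypotheses must be used jointly, and this combinatorial bookkeeping is the technical heart of the argument.
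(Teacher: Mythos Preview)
Your overall strategy---moment method, expectation plus variance, and the identification of doubled-tree walks as the main term---is exactly the paper's. The gap is in Step~3: your trichotomy (a)/(b)/(c) does not cover all error walks, and the argument you give for class~(a) is wrong for an important subclass. Consider, for $k=6$, the closed walk $1\to2\to3\to4\to1\to2\to1$. It has $v_\gamma = 4 = k/2+1$ distinct vertices, its underlying graph is a $4$-cycle (not a tree), and every vertex is incident to at most four edge-coordinates, so it is \emph{not} in your class~(c). Your class~(a) argument (``too few distinct codeword labels'') does not apply either, since $v_\gamma$ equals $k/2+1$. More generally, any walk with $v_\gamma \ge k/2+1$ that is not a doubled tree escapes your bound: using condition~(iii) alone gives a contribution $c^{k}\,p^{\,v_\gamma-1-k/2}$, which is $O(1)$ (not $o(1)$) when $v_\gamma = k/2+1$ and blows up when $v_\gamma > k/2+1$.

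The paper fixes this by a different partition: split by $v_\gamma$ alone. For $v_\gamma \le \ell/2$ use condition~(iii) exactly as you propose. For $v_\gamma \ge \ell/2+1$ the paper does \emph{not} argue via per-vertex character cancellation; instead it invokes the counting estimate $W_\gamma = O(n^{\ell-v_\gamma+1})$ (and $O(n^{\ell-v_\gamma})$ when $\gamma$ is not a doubled tree), which is Lemma~\ref{le:0} imported from~\cite{Xia}. That lemma interprets $\E(\omega_\gamma)$ as the number of solutions $(t_0,\dots,t_{\ell-1})\in[1\isep n]^\ell$ to a system of linear relations among the generator-matrix columns $\bg_{t_u}$, and bounds this count using $d^\perp\ge 5$ in a structural way (any four columns of a parity-check matrix are independent) that is strictly stronger than your ``at most four coordinates per codeword'' device. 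This is the missing idea; once you have $W_\gamma = O(n^{\ell-v_\gamma})$ for non-tree $\gamma$, the contribution of every non-tree walk with $v_\gamma=\ell/2+1$ becomes $O(1/n)$, and walks with $v_\gamma>\ell/2+1$ contribute $O((p/n)^{1/2})$. The same counting lemma (in its two-path form, Lemmas~\ref{le:1}--\ref{le:2}) is what makes your Step~4 variance bound go through; the covariance of two walks vanishes unless they share at least two vertices, and then $W_{\gamma_1,\gamma_2}-W_{\gamma_1}W_{\gamma_2}=O(n^{2\ell-v_{\gamma_1,\gamma_2}})$ gives the extra factor of $1/p$.
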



We remark that condition (iii) is quite natural for linear codes, for instance, it appeared as a requirement in the construction of deterministic sensing matrices from linear codes that satisfy the ideal Statistical Restricted Isometry Property (see \cite[Definition 1]{DSM} or \cite{DSM2}). For binary linear codes $\C$ of length $n$, (iii) is equivalent to the condition
\[\left|\wt(\underline{c})-\frac{n}{2} \right| \le \frac{c}{2} \sqrt{n} \]
for any nonzero codeword $\underline{c} \in \C$. Here $\wt(\underline{c})$ is the Hamming weight of the codeword $\underline{c}$. There is an abundance of binary linear codes that satisfy this condition, for example, the Gold codes (\cite{GOL}), some families of BCH codes (see \cite{DSM,DIN1,DIN2}, and many families of cyclic and linear codes studied in the literature (see for example \cite{CHE,TAN,Xiong2}).

Next, we emphasis that in Theorem \ref{thm:SC} we prove the convergence ``in probability''. This is not only stronger than say $\E  \wM(x) \to  \SC(x)$ in probability theory (compared with Theorem \ref{thm:xia}) (see \cite{PTA}), but also much more useful in practice: it implies that under the conditions (i)-(iii), if $n_i$ is relatively large, then for any fixed $x$,  randomly choosing $p_i$ codewords from $\C_i$, then for most of the case, the resulting function $\wM(x)$ will be very close to the value $\SC(x)$. This can be easily confirmed by numerical experiments. We focus on binary Gold codes which have length $n=2^m-1$ and dual distance 5. Binary Gold codes satisfy the condition (\ref{1:srip}) because there are only three nonzero weights, namely $2^{m-1}-2^{(m-1)/2}, 2^{m-1}$ and $2^{m-1}+2^{(m-1)/2}$. Also the Gold codes have dimension $2m$ and so $\frac{n}{N}=\frac{2^m}{2^{2m}} \to 0$ as $m \to \infty$. For each pair $(n,p)$ in the set $\left\{(31,8),(127,20),(511,35), (2047,50)\right\}$, we randomly pick $p$ codewords from the binary Gold code of length $n$ and form the corresponding matrix, from which we compute and plot the empirical spectral distribution together with Wigner's distribution (see Figures \ref{fig1} to \ref{fig4} below). We do it 10 times for each such pair $(n,p)$ and at each time, we find that the plots are almost the same as before: they are all very close to Wigner's semicircle law and as the length $n$ increases, they become more and more indistinguishable.

\begin{figure}[ht]
\begin{center}
\includegraphics[angle=0,width=1.0 \textwidth,height=0.2 \textheight]{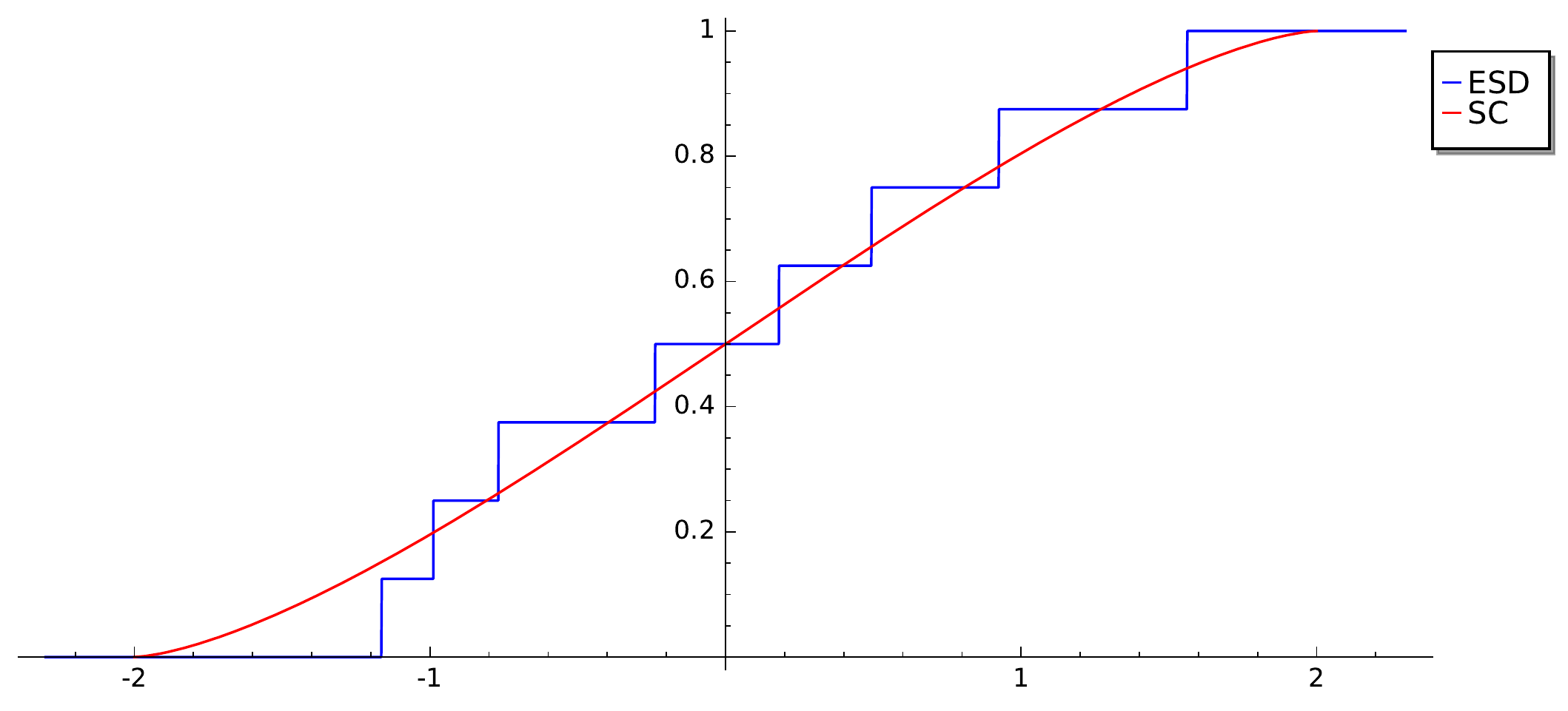}
\caption{Empirical spectral distribution (ESD) of $[31,10,12]$ binary Gold code versus Wigner semicircle law (SC), with $p=8, \db=5$}\label{fig1}
\end{center}
\end{figure}
\begin{figure}[ht]
\begin{center}
\includegraphics[angle=0,width=1.0 \textwidth,height=0.2 \textheight]{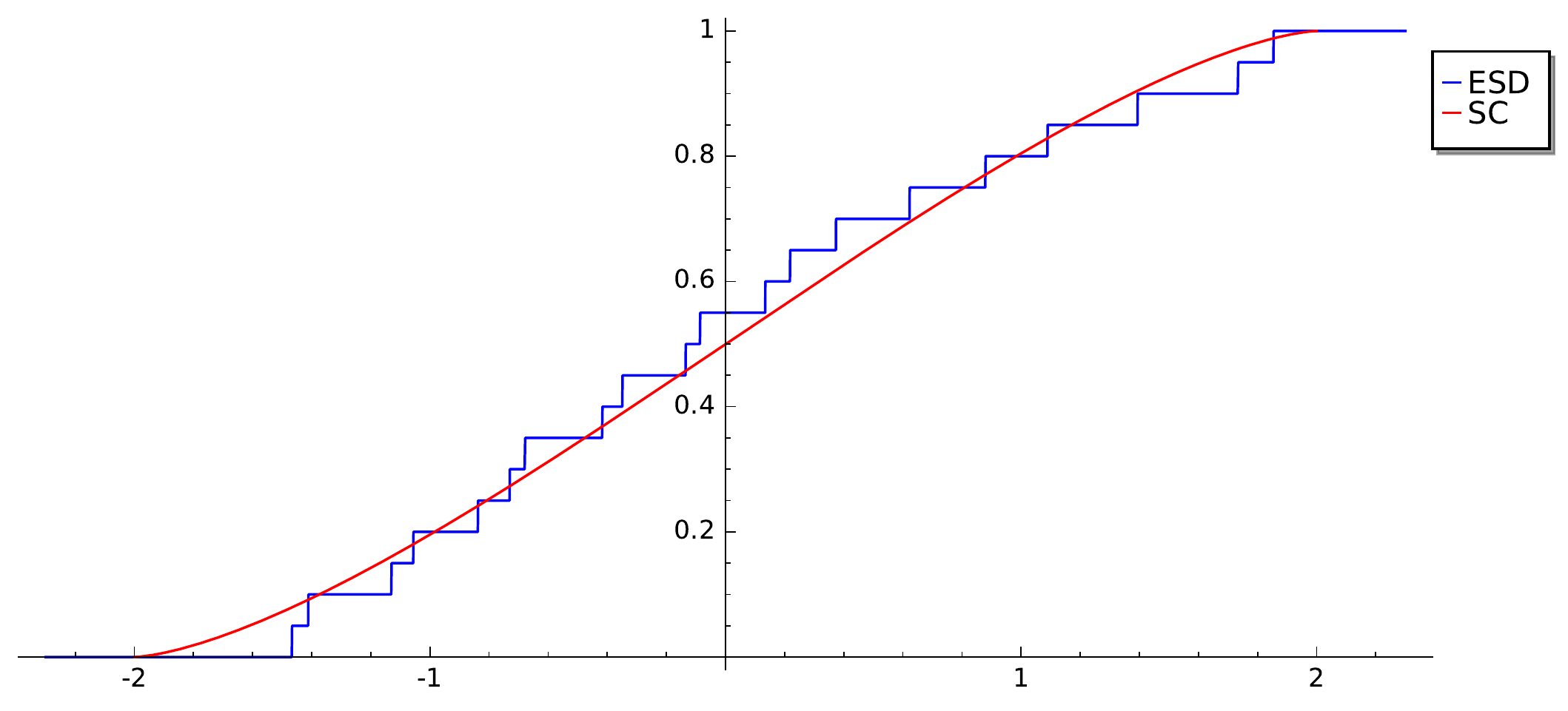}
\caption{Empirical spectral distribution (ESD) of $[127,14,56]$ binary Gold code versus Wigner semicircle law (SC), with $p=20, \db=5$}\label{fig2}
\end{center}
\end{figure}
\begin{figure}[ht]
\begin{center}
\includegraphics[angle=0,width=1.0 \textwidth,height=0.2 \textheight]{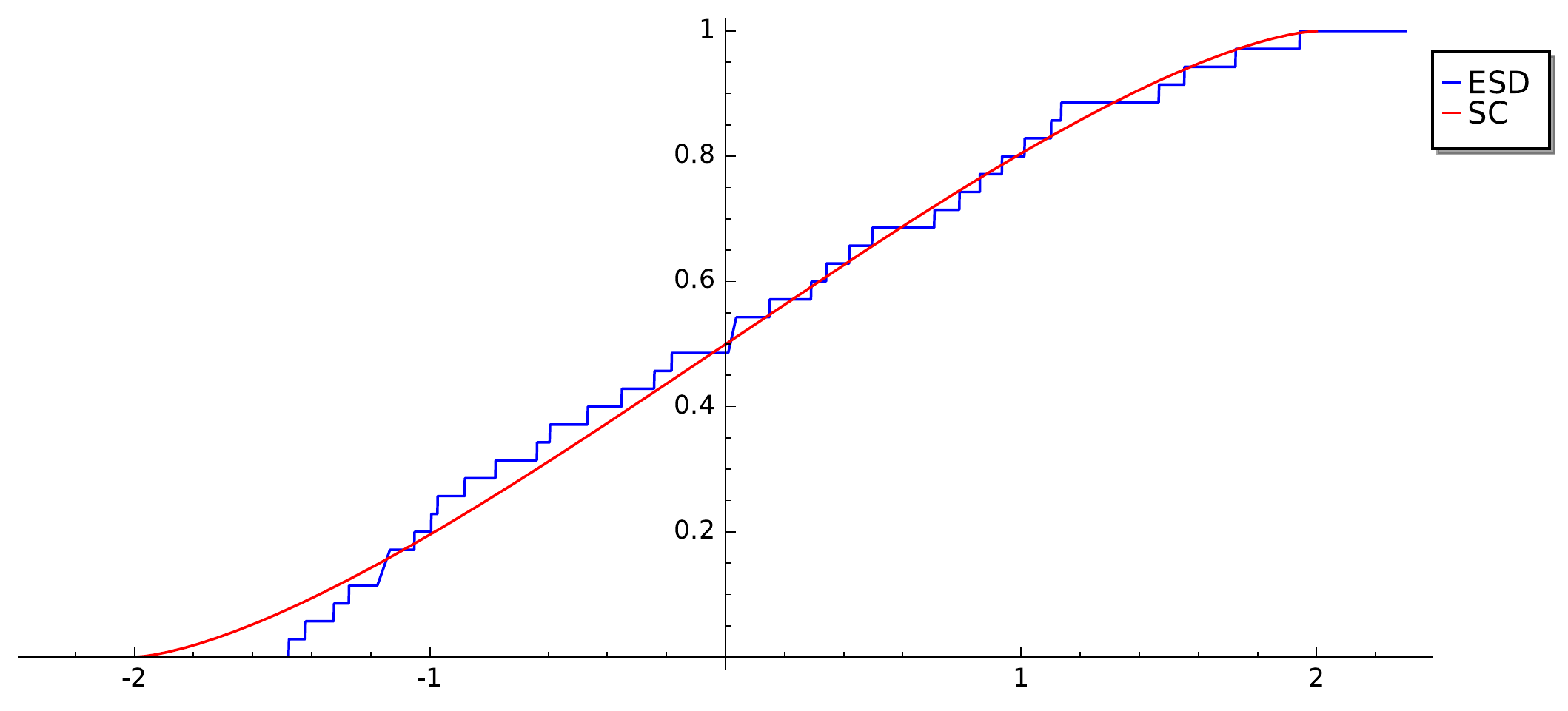}
\caption{Empirical spectral distribution (ESD) of $[511,18,240]$ binary Gold code versus Wigner semicircle law (SC), with $p=35, \db=5$}\label{fig3}
\end{center}
\end{figure}
\begin{figure}[ht]
\begin{center}
\includegraphics[angle=0,width=1.0 \textwidth,height=0.2 \textheight]{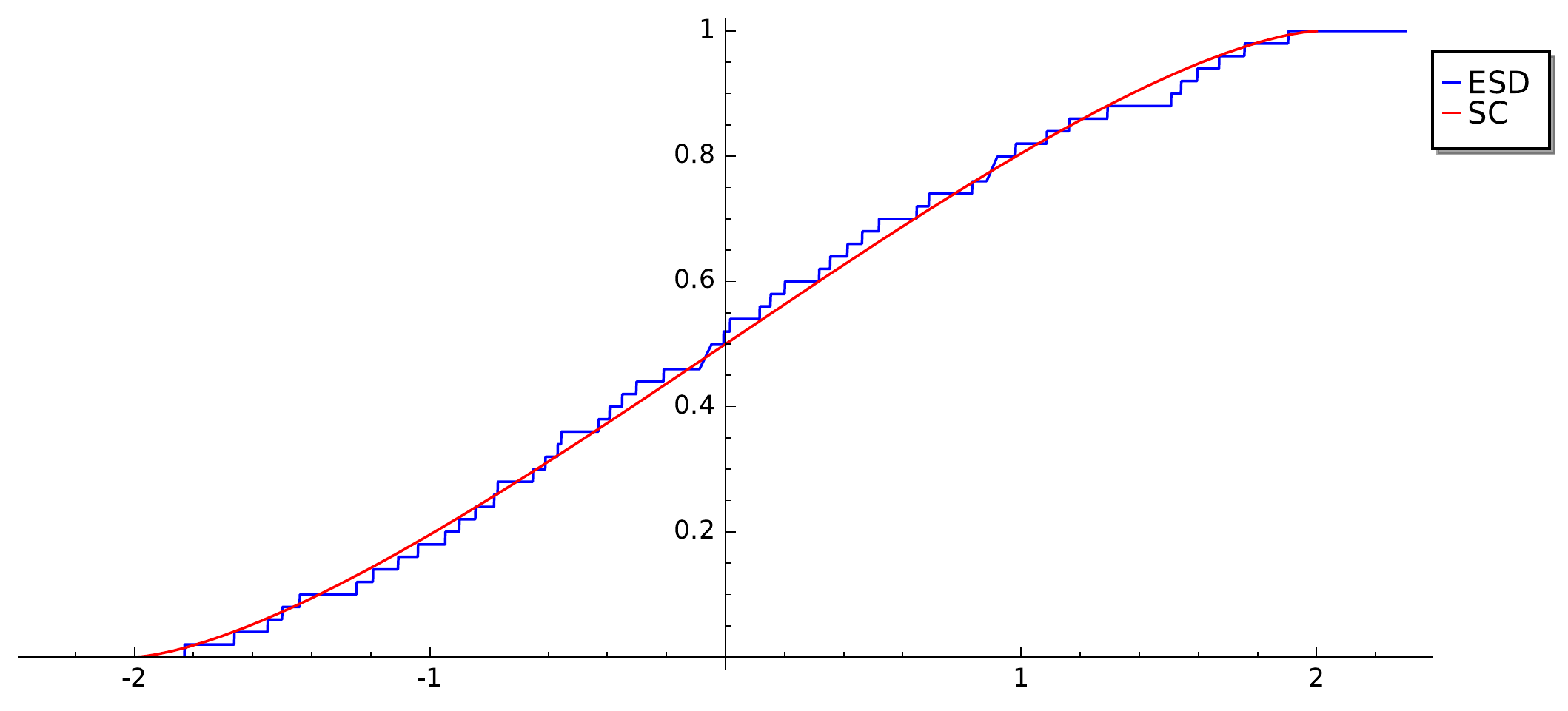}
\caption{Empirical spectral distribution (ESD) of $[2047,22,992]$ binary Gold code versus Wigner semicircle law (SC), with $p=50, \db=5$}\label{fig4}
\end{center}
\end{figure}

To prove Theorem \ref{thm:SC}, we use the moment method, that is, we compute the moments and the variance for the empirical spectral distribution and compare them with Wigner's semicircle law. This is a standard method in random matrix theory and has been used in \cite{Babadi2,Xia}. We mainly follow the ideas and techniques from \cite{Xia}. However, compared with \cite{Xia}, due to the nature of the problem, the computation, especially the variance becomes much more complicated. In order to present the ideas of the proof of Theorem \ref{thm:SC} more clearly, in Section \ref{sec:2} we sketch the main steps of the proof of Theorem \ref{thm:xia} in \cite{Xia}. This will serve as a general guideline for the proofs later on; We also prove some counting lemmas which will be used later. In Section \ref{sec:3} we compute the required moments with respect to Wigner's semicircle law, and in Section \ref{sec:4} we study the variance. This concludes the proof of Theorem \ref{thm:SC}. Sections \ref{sec:3} and \ref{sec:4} require the use of some crucial but technical lemmas. In order to present the ideas of the proofs more transparently, we postpone the proofs of those lemmas in Section \ref{appendix} {\bf Appendix}. Finally in Section \ref{sec:5} we conclude the paper.

\section{Preliminaries}\label{sec:2}

In this section we outline the main steps in the proof of Theorem \ref{thm:xia} in \cite{Xia}. This not only serves as a guideline of general ideas to be appreciated in later sections, but also allows us to introduce some crucial results which will be repeatedly used later.

Throughout the paper, let $\C$ be an $[n,k,d]_q$ linear code. We always assume that its dual distance satisfies $\db \ge 5$. For any $a<b$, denote by $[a \isep b]$ the set of integers in the closed interval $[a,b]$. Let $\ep$ be the natural mapping $\ep: \Fq^{n} \to \bC^{n}$ obtained component-wise from the standard additive character on $\Fq$.

\subsection{Outline of the main steps in \cite{Xia}}

For a positive integer $p$, let $\Op$ be the set of maps $s: [1 \isep p] \to \D=\ep(\C)$ endowed with the uniform probability measure. Each $s \in \Op$ gives rise to a $p \times n$ matrix $\Phi(s)$ whose rows are listed as $s(1), \ldots, s(p)$. Let $\G(s)$ denote the Gram matrix of $\frac{1}{\sqrt{n}}\Phi(s)$, that is, $\G(s)=\frac{1}{n}\Phi(s)\Phi(s)^*$. For any positive integer $\l$, the $\l$-th moment of the spectral measure of $\G(s)$ is given by
\[A_{\l}(s)=\frac{1}{p} \tr \left(\G(s)^{\l}\right)=\frac{1}{pn^{\l}} \tr \left((\Phi(s) \Phi(s)^*)^{\l}\right). \]
Expanding the trace $\tr \left((\Phi(s) \Phi(s)^*)^{\l}\right)$, we have
$$A_{\l}(s)=\frac{1}{pn^{\l}} \sum_{\ga \in \Pi_{\l,p}} \om_\ga(s),$$
where $\Pi_{\l,p}$ is the set of all closed maps $\gamma$ from $[0 \isep \l]$ to $[1 \isep p]$ (``closed'' means $\gamma(0)=\gamma(\l)$), and
\begin{equation}\label{eq:Om(s)}
\om_\ga(s)=\prod_{j=0}^{\l-1} \langle s \circ \ga(j),s \circ \ga(j+1)\rangle.
\end{equation}
Here $ s \circ \gamma$ is the composition of the functions $s$ and $\gamma$, and $\langle \cdot,\cdot \rangle$ is the standard inner product. Taking expectation with respect to the probability space $\Op$ and rearranging the terms, \emph{the first main step} is to rewrite $\E(A_{\l}(s), \Op)$ as
\begin{eqnarray*} \label{2:exp} \E(A_{\l}(s), \Op)=\frac{1}{pn^{\l}} \sum_{\ga \in \Pi_{\l,p}/\Sp}\frac{p!}{(p-\vg)!}\E(\om_\ga(s), \Op),\end{eqnarray*}
where $\Pi_{\l,p}/\Sp$ is the set of equivalence classes of closed paths of $\Pi_{\l,p}$ under the equivalence relation
$$\ga_1 \sim \ga_2 \iff \ga_1=\sigma \circ \ga_2 \ \exists \, \sigma \in \Sp.$$
Here $\Sp$ is the permutation group on the set of integers $[1 \isep p]$.

It is easy to see that
\[\E(\om_\ga(s), \Op)=\E(\om_\ga(s), \Omega(V_{\ga})), \]
where
\[V_\gamma=\gamma\left([0 \isep l]\right), \quad v_\gamma= \# V_\gamma \leq \l,\]
and $\Omega(V_{\ga})$ is uniform probability space of all maps from $V_{\ga}$ to $\mathcal{D}$.

For simplicity, define
\begin{eqnarray} \Wg&=&\E(\om_\ga(s),\Omega(V_\ga)). \label{eq:wr}
\end{eqnarray}
\emph{The second main step} is to use properties of linear codes over finite fields to conclude that the quantity $\Wg$ is exactly the number of solutions $(t_0,t_1,\ldots,t_{\l-1}) \in [1\isep n]^\l$ satisfying the system of equations
$$\sum_{u \in I_a} (\bg_{t_u}-\bg_{t_{u-1}})=\bz, \ \forall 1 \leq a \leq v_\ga.$$
Here we write
$$V_\ga=\{z_a: 1 \leq a \leq v_\ga\}, \quad I_a=\ga^{-1}(z_a),\quad \forall \, a,$$
and $\bg_1,\bg_2,\ldots,\bg_n$ are the $n$ columns of a $k \times n$ generating matrix $G$ of the linear code $\C$.

Finally, in \emph{the last main step}, by some detailed analysis using number theory and graph theory, one can obtain (see \cite[Section IV]{Xia})
\begin{lemma} \label{le:0}
$$\Wg=\left\{\begin{array}{ll}
n^{\l-\vg+1} & \ga \in \Ga, \\
O\left(n^{\l-\vg}\right) & \ga \notin \Ga.
\end{array} \right.$$
Here $\Ga \subset \Pi_{\l,p}/\Sp$ is the subset of all closed paths that form double trees.
\end{lemma}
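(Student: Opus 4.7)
My plan is to proceed by a structural analysis of the closed walk $\ga$ combined with a pattern-counting argument that exploits the dual distance condition $\db \ge 5$.

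First, I would reformulate the system graph-theoretically. Viewing $\ga$ as a closed walk of length $\l$ on the vertex set $\Vg$, write $e_j = (\ga(j),\ga(j+1))$ for $j = 0,\ldots,\l-1$. The equation at vertex $z_a$ then becomes $\sum_{j=0}^{\l-1} s_j^a \, \bg_{t_j} = \bz$ in $\Fq^k$, where $s_j^a = +1$ if $e_j$ leaves $z_a$, $-1$ if $e_j$ enters $z_a$, and $0$ otherwise. I would then stratify $\Wg$ by the coincidence pattern $\pi$ of $(t_0,\ldots,t_{\l-1})$, i.e.\ the partition of $[0 \isep \l-1]$ defined by $j \sim_\pi j' \iff t_j = t_{j'}$. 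Since $\db \ge 2$ guarantees that the columns of $G$ are pairwise distinct, $\pi$ is well-defined; for each pattern with $m_\pi$ blocks, the number of tuples realizing $\pi$ is $n(n-1)\cdots(n-m_\pi+1) \le n^{m_\pi}$, subject to the requirement that the collapsed equations $\sum_B \bigl(\sum_{j \in B} s_j^a\bigr) \bg_{r_B} = \bz$ admit a solution in distinct labels $r_B \in [1 \isep n]$.

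For $\ga \in \Ga$, I would argue by leaf-peeling on the underlying tree. At any leaf, the unique incident edge is traversed exactly twice in opposite directions, so the equation localized at that leaf reduces to $\bg_{t_{j_1}} - \bg_{t_{j_2}} = \bz$ and forces $t_{j_1} = t_{j_2}$. The $n$ choices for this common value contribute a factor of $n$, and the contributions of that leaf edge cancel at its other endpoint as well, so the reduced system is exactly the double-tree system for the tree with the leaf removed. Induction on $\vg$ then yields $\Wg = n^{\l/2} = n^{\l-\vg+1}$ exactly.

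For $\ga \notin \Ga$, the task is to show that every admissible pattern satisfies $m_\pi \le \l - \vg$, and then to sum over the finite family of patterns (whose cardinality depends only on $\l$) to obtain $\Wg = O(n^{\l-\vg})$. The condition $\db \ge 5$ enters decisively here: any nontrivial collapsed row with support of size at most $4$ is immediately excluded, since any four columns of $G$ are $\Fq$-linearly independent. The pattern analysis then splits according to whether the underlying multigraph of $\ga$ contains a cycle, in which case the extra cycle relation is an additional independent linear constraint, or is a tree with some edge of multiplicity exceeding two, in which case the extra traversals either refine $\pi$ beyond the tree-pairing or produce a small-support relation forbidden by $\db \ge 5$.

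The main obstacle is precisely this last step: establishing the strict improvement $m_\pi \le \l - \vg$ (rather than the naive $m_\pi \le \l - \vg + 1$) uniformly across all non-double-tree walks. Doing so requires a careful case analysis of how extra edges or extra traversals interact with tree-pairing patterns, combined with a delicate propagation of the dual distance condition through the cycle or multiplicity structure. This combinatorial and linear-algebraic bookkeeping is where the detailed arguments of \cite[Section IV]{Xia} are required.
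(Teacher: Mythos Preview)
Your proposal is reasonable, but it is worth noting that the paper itself does not prove this lemma at all: it is simply quoted from \cite[Section IV]{Xia} as an input to the present work. So there is no ``paper's own proof'' to compare against beyond the citation you yourself invoke at the end.

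That said, your sketch is an accurate outline of the argument in \cite{Xia}. The leaf-peeling for $\ga \in \Ga$ is exactly how the equality $\Wg = n^{\l-\vg+1}$ is obtained there, and your identification of the crux---showing that every admissible coincidence pattern for $\ga \notin \Ga$ has at most $\l-\vg$ blocks, using $\db \ge 5$ to kill short relations---is precisely where the work lies. Your description of the stratification by coincidence patterns $\pi$ and the use of $\db \ge 5$ to forbid small-support linear relations among the columns $\bg_t$ matches the mechanism in \cite{Xia}. In short: your proposal is a faithful expansion of the citation the paper gives, correctly flagging the step that requires the detailed case analysis, and there is no gap beyond what you already acknowledge.
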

Armed with Lemma \ref{le:0}, we then can easily obtain the estimate
$$\E(A_\l(s),\Op)=\sum_{j=0}^{\l-1}\frac{y^j}{j+1}\binom{\l}{j}\binom{\l-1}{j}+O\left(\frac{\l^{\l+1}}{n}\right),$$
which is more than enough to prove Theorem \ref{thm:xia}.

\subsection{Two counting lemmas}

For $\gamma_1,\gamma_2 \in \Pi_{\l,p}$, we define
\begin{eqnarray} \Wgu:&=&\E(\om_{\ga_1}(s)\overline{\om_{\ga_2}(s)},\Op), \label{eq:wrr}
\end{eqnarray}
\begin{eqnarray*}
\Vgu:=V_{\ga_1} \cup V_{\ga_2}, & \vgu=\#\Vgu \, , \label{eq:vr+r} \\
\Vgi:=V_{\ga_1} \cap V_{\ga_2}, & \vgi=\#\Vgi \, \label{eq:vr-r}.
\end{eqnarray*}

We may reorder the indices as
$$\Vgi=\{z_a: a \in [1 \isep \vgi]\}, $$
$$V_{\ga_1} \setm V_{\ga_2}=\left\{z_a: a \in [\vgi+1 \isep v_{\ga_1}]\right\},$$
and
$$V_{\ga_2} \setm V_{\ga_1}=\left\{z_a: a \in [v_{\ga_1}+1 \isep \vgi]\right\}.$$
Let
$$I_a:=\gamma_1^{-1}(z_a), \quad J_a:=\gamma_2^{-1}(z_a) \quad \forall \, a.$$
Similar to the second main step in the previous subsection, expanding the expression $\om_{\ga_1}(s)\overline{\om_{\ga_2}(s)}$, collecting terms according to the sets $\Vgi, V_{\ga_1} \setm V_{\ga_2}$ and $V_{\ga_2} \setm V_{\ga_1}$ respectively and taking expectation over the probability space $\Op$, we can conclude that the term $\Wgu$ defined above is exactly the number of solutions $(t_0,\ldots,t_{\l-1},w_0,\ldots,w_{\l-1}) \in [1 \isep n]^{2\l}$ such that
\begin{align}
\sum_{u \in I_a} (\bg_{t_u}-\bg_{t_{u-1}})+\sum_{u \in J_a} (\bg_{w_{u-1}}-\bg_{w_u})&=\bz \quad \ \forall 1 \leq a \leq \vgi \, , \label{A}\\
\sum_{u \in I_b} (\bg_{t_u}-\bg_{t_{u-1}})&=\bz \quad \ \forall \vgi+1 \leq b \leq v_{\ga_1} \, , \label{B}\\
\sum_{u \in J_c} (\bg_{w_{u-1}}-\bg_{w_u})&=\bz \quad \ \forall v_{\ga_1}+1 \leq c \leq \vgu \, . \label{C}
\end{align}

We remark that in equations (\ref{A})--(\ref{C}), one equation is redundant, so we can remove any one equation without affecting the set of solutions. Using this we can obtain an estimate of $\Wgu$ as below:

\begin{lemma}\label{le:1}
If $\vgi \geq 1$, then
$$\Wgu=\begin{cases}
n^{2\l-\vgu+1} & \emph{ if } (\ga_1,\ga_2) \in \tilde{\Ga},\\
O(n^{2\l-\vgu}) & \emph{ if } (\ga_1,\ga_2) \notin \tilde{\Ga},
\end{cases}$$
where $\tilde{\Ga}$ is the set of all $(\ga_1,\ga_2) \in \Pi_{\l,p}^2$ such that the systems of equations (\ref{A})-(\ref{C}) for $\Wgu$ can be completely solved in the forms $t_u=t_{u-1}$ and $w_{v-1}=w_v$ for some $u$ and $v$.
\end{lemma}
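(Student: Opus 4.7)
The approach is to adapt the combinatorial analysis used in the proof of Lemma \ref{le:0} to the two-path setting. First, I would identify $\Wgu$ with the number of tuples $(t_0,\ldots,t_{\l-1},w_0,\ldots,w_{\l-1}) \in [1\isep n]^{2\l}$ satisfying (\ref{A})--(\ref{C}), as already done in the text preceding the lemma. The key algebraic input is that the sum of all $\vgu$ equations telescopes to $\bz$: both the $t$-part (the $t$-terms of (\ref{A}) together with all of (\ref{B})) and the $w$-part (the $w$-terms of (\ref{A}) together with all of (\ref{C})) are telescoping sums over the closed paths $\ga_1$ and $\ga_2$ respectively. This provides one $\F_q$-linear relation among the $\vgu$ equations. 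The hypothesis $\vgi \geq 1$ is essential here: if $\vgi = 0$ the $t$- and $w$-parts would decouple into two independent telescoping relations, giving the larger exponent $2\l - \vgu + 2$; with $\vgi \geq 1$ the mixed equations (\ref{A}) bind the two relations into a single one, leaving exactly $\vgu - 1$ independent equations.

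Next, I would isolate the ``trivial'' tree-like solutions, namely those obtained by forcing $t_u = t_{u-1}$ or $w_{v-1} = w_v$ term-by-term so that every difference $\bg_{t_u} - \bg_{t_{u-1}}$ and $\bg_{w_{v-1}} - \bg_{w_v}$ vanishes on its own. The forced equalities generate an equivalence relation on the $2\l$ indices, and by the very definition of $\tilde{\Ga}$ this relation has exactly $2\l - (\vgu - 1) = 2\l - \vgu + 1$ classes whenever $(\ga_1,\ga_2) \in \tilde{\Ga}$. Assigning each class a value in $[1\isep n]$ independently produces the claimed $n^{2\l - \vgu + 1}$ trivial solutions. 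Any non-trivial solution must come from a genuine $\F_q$-linear dependency $\sum_j c_j \bg_{i_j} = \bz$ among distinct columns with $c_j \in \{\pm 1\}$; because $\db \geq 5$, any such dependency has support of size at least five, which imposes an extra independent constraint and contributes at most $O(n^{2\l - \vgu})$.

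For $(\ga_1,\ga_2) \notin \tilde{\Ga}$ the system cannot be fully resolved by trivial identifications, so at least one equation survives as a genuine linear constraint after all forced equalities are imposed; combined with $\db \geq 5$ this extra constraint costs a further factor of $O(1/n)$, giving $\Wgu = O(n^{2\l - \vgu})$. The main obstacle is the combinatorial bookkeeping needed to enumerate all solution patterns and verify the tree-versus-non-tree dichotomy in the presence of the coupling equations (\ref{A}) that mix the $t$-variables of $\ga_1$ with the $w$-variables of $\ga_2$ through the shared vertex set $\Vgi$; this is the genuine technical novelty compared to the single-path Lemma \ref{le:0}. The underlying tools, however, are standard and parallel the graph- and number-theoretic analysis of \cite[Section IV]{Xia}.
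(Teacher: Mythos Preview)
Your approach is sound and would ultimately succeed, but it takes a considerably longer route than the paper's proof. You propose to redo from scratch, in the two-path setting, the entire graph- and number-theoretic analysis of \cite[Section IV]{Xia} underlying Lemma~\ref{le:0}; you even flag the ``coupling equations'' (\ref{A}) as ``the genuine technical novelty'' compared to the single-path case. The paper shows there is no such novelty: once $\vgi \geq 1$, a shared vertex allows one to concatenate $\ga_1$ with the edge-reversed path $\overline{\ga}_2$ (after a cyclic shift so both start and end at that shared vertex) into a \emph{single} closed path $\ga$ of length $2\l$ on the vertex set $\Vgu$. Reversing the edges of $\ga_2$ flips $\bg_{w_u}-\bg_{w_{u-1}}$ to $\bg_{w_{u-1}}-\bg_{w_u}$, so the system of equations for $W_\ga$ coincides literally with (\ref{A})--(\ref{C}). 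Lemma~\ref{le:0} applied to $\ga$ (with $\l$ replaced by $2\l$ and $\vg$ by $\vgu$) then gives Lemma~\ref{le:1} immediately.

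What your approach buys is a self-contained argument that makes the role of the hypothesis $\vgi \geq 1$ explicit at the level of linear-algebraic redundancy (one telescoping relation rather than two). What the paper's approach buys is brevity: the reduction is three lines, and all the delicate bookkeeping you anticipate is already absorbed into Lemma~\ref{le:0}. Recognising that the two-path problem is a disguised one-path problem is the main insight you are missing.
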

\begin{proof}[Proof of Lemma \ref{le:1}]
Since $\vgi \geq 1$, it can be easily seen that the graph $\ga:=\ga_1 \cup \overline{\ga}_2$ is a closed path with $\vgi$ vertices and $2\l$ edges, where $\overline{\ga}_2$ is the closed path defined by reverting the directions of the edges of $\ga_2$ (after a cyclic relabelling of the vertices if necessary). The systems of equations (\ref{A})-(\ref{C}) for $\Wgu$ are precisely the same as those for $\Wg$. Therefore Lemma \ref{le:1} follows directly from Lemma \ref{le:0} on the estimate of $\Wg$.
\end{proof}

First notice that $W_\ga \ge 0$ for any $\ga$.  Armed with Lemmas \ref{le:0} and \ref{le:1}, we obtain

\begin{lemma}\label{le:2}
$$\Wgu-W_{\ga_1}W_{\ga_2}=\begin{cases}
0 & \emph{ if } \vgi \in \{0, 1\};\\
O(n^{2\l-\vgu}) & \emph{ if } \vgi \geq 2\,.
\end{cases}$$
\end{lemma}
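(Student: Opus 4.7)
The plan is to handle the three sub-cases $\vgi = 0$, $\vgi = 1$, and $\vgi \geq 2$ separately, with the first two yielding the exact identity $\Wgu = W_{\ga_1}W_{\ga_2}$ and the third following from the order estimates in Lemmas \ref{le:0} and \ref{le:1}.

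For $\vgi = 0$, system (\ref{A}) is empty, so the constraints for $\Wgu$ decouple into system (\ref{B}) in the $t$-variables and system (\ref{C}) in the $w$-variables; these are exactly the systems defining $W_{\ga_1}$ and $W_{\ga_2}$ respectively, giving $\Wgu = W_{\ga_1}W_{\ga_2}$ directly. For $\vgi = 1$, system (\ref{A}) consists of a single equation which, by the stated redundancy of the combined system (\ref{A})-(\ref{C}), is automatically implied by (\ref{B}) and (\ref{C}); so solutions still factor across $t$ and $w$. To identify $\#\{t:(\ref{B})\}$ with $W_{\ga_1}$, I would apply the analogous one-equation-redundancy principle to the system defining $W_{\ga_1}$: its $v_{\ga_1}$ equations sum to zero, so the one equation omitted in (\ref{B}) (namely the one at the intersection vertex) is implied by the remaining $v_{\ga_1}-1$. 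The symmetric argument handles $w$ and $W_{\ga_2}$, yielding $\Wgu = W_{\ga_1}W_{\ga_2}$.

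For the main case $\vgi \geq 2$, the strategy is to show that both $\Wgu$ and $W_{\ga_1}W_{\ga_2}$ are $O(n^{2\l-\vgu})$ and then subtract. By Lemma \ref{le:0}, $W_{\ga_i} = O(n^{\l - v_{\ga_i} + 1})$, so $W_{\ga_1}W_{\ga_2} = O(n^{2\l - v_{\ga_1} - v_{\ga_2} + 2}) = O(n^{2\l - \vgu - \vgi + 2}) = O(n^{2\l - \vgu})$ since $\vgi \geq 2$. For $\Wgu$, the key claim is that $\tilde\Ga$ contains no pair with $\vgi \geq 2$, after which Lemma \ref{le:1} gives $\Wgu = O(n^{2\l - \vgu})$ immediately. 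As explained in the proof of Lemma \ref{le:1}, membership in $\tilde\Ga$ is equivalent to $\ga_1 \cup \overline{\ga}_2$ being a double tree, which forces $\vgu = \l + 1$. Moreover, each $\ga_i$, being a closed subwalk inside this tree, uses every edge it visits an even number of times, so $\l \geq 2(v_{\ga_i} - 1)$ and thus $v_{\ga_i} \leq \l/2 + 1$. Summing these bounds gives $\vgi = v_{\ga_1} + v_{\ga_2} - \vgu \leq 1$, contradicting $\vgi \geq 2$.

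The main foreseeable obstacle is this last combinatorial inequality, i.e.\ extracting from the double tree condition on $\ga_1 \cup \overline{\ga}_2$ the per-factor bounds $v_{\ga_i} \leq \l/2 + 1$ that ultimately force $\vgi \leq 1$. Once that is in place, the rest of the argument is a direct bookkeeping exercise using the two counting lemmas already proved, and no further analysis of the underlying systems of equations is needed.
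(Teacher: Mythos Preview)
Your argument for $\vgi\in\{0,1\}$ matches the paper's exactly. For $\vgi\ge 2$ your route is genuinely different from the paper's, and it is correct.

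The paper does \emph{not} try to show that $\tilde\Ga$ is empty when $\vgi\ge 2$. Instead it keeps the case split $(\ga_1,\ga_2)\in\tilde\Ga$ versus $(\ga_1,\ga_2)\notin\tilde\Ga$: in the first sub-case it argues directly that the reduced equations are pure $t$-equalities and pure $w$-equalities, hence $\Wgu=W_{\ga_1}W_{\ga_2}$ exactly; in the second it uses the inequality $0\le \Wgu-W_{\ga_1}W_{\ga_2}\le \Wgu$ together with Lemma~\ref{le:1}. So the paper never needs your double-tree counting argument, but it does rely on $\Wgu\ge W_{\ga_1}W_{\ga_2}$, which comes from the observation that any pair of solutions to the separate $W_{\ga_1}$- and $W_{\ga_2}$-systems automatically satisfies (\ref{A})--(\ref{C}).

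Your approach trades that inequality for a clean combinatorial fact: if $\ga_1\cup\overline{\ga}_2$ is a double tree on $\vgu=\l+1$ vertices, then each $\ga_i$ is a closed walk in a tree and hence crosses each of the $v_{\ga_i}-1$ edges of its spanning subtree at least twice, forcing $v_{\ga_i}\le \l/2+1$ and thus $\vgi=v_{\ga_1}+v_{\ga_2}-\vgu\le 1$. This shows the $\tilde\Ga$ sub-case is actually vacuous when $\vgi\ge 2$, after which Lemma~\ref{le:1} and Lemma~\ref{le:0} bound $\Wgu$ and $W_{\ga_1}W_{\ga_2}$ separately by $O(n^{2\l-\vgu})$. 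The payoff is that you never need $\Wgu-W_{\ga_1}W_{\ga_2}\ge 0$; the cost is the extra graph-theoretic step, which you have handled correctly.
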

\begin{proof}[Proof of Lemma \ref{le:2}]
We Write $\Wguu:=\Wgu-W_{\ga_1}W_{\ga_2}$. If $\vgi=0$, then equations in (\ref{A}) become empty, and equations in (\ref{B}) and (\ref{C}) are independent to each other, the number of solutions to which are $W_{\ga_1}$ and $W_{\ga_2}$ respectively. Hence $\Wgu=W_{\ga_1}W_{\ga_2}$ and so $\Wguu=0$.

If $\vgi=1$, then there is precisely one equation in (\ref{A}). We remove this equation without affecting $\Wgu$. The remaining equations are either in (\ref{B}) or in (\ref{C}), the number of solutions to which are exactly $W_{\ga_1}$ and $W_{\ga_2}$ respectively. Hence in this case we also have $\Wguu=0$.

Now assume $\vgi \geq 2$. If $(\ga_1,\ga_2) \in \tilde{\Ga}$, then each reduced equation is either of the form $t_u=t_{u-1}$ or $w_{v-1}=w_v$, which correspond to equations in either (\ref{B}) or (\ref{C}) respectively. Hence we still have $\Wguu=0$; otherwise if $(\ga_1,\ga_2) \notin \tilde{\Ga}$, then the result follows from the fact that $0 \leq \Wguu \leq \Wgu$ and Lemma \ref{le:1} on the estimate of $\Wgu$.
\end{proof}

\section{The $\l$-th Moment Estimate} \label{sec:3}

We use notation from Section \ref{sec:2}. Let $\C$ be an $[n,k,d]_q$ linear code with dual distance $\db \ge 5$. For a positive integer $p$, let $\Omega_{p, I}$ be the set of all injective maps $s: [1 \isep p] \to \D$ endowed with the uniform probability measure. Each $s \in \Omega_{p, I}$ gives rise to a $p \times n$ matrix $\Phi(s)$ whose rows are listed as $s(1), \ldots, s(p)$. Let $\G(s)$ denote the Gram matrix of $\frac{1}{\sqrt{n}}\Phi(s)$, that is, $\G(s)=\frac{1}{n}\Phi(s)\Phi(s)^*$.

Define
$$\G_I(s):=\sqrt{\frac{n}{p}}(\G(s)-I_p)=\sqrt{\frac{n}{p}}
\left(\frac{1}{n}\Phi(s)\Phi(s)^*-I_p\right),$$
and
$$\AIl(s):=\frac{1}{p}\text{Tr}(\G_I(s)^\l)=\frac{1}{p}
\left(\frac{n}{p}\right)^{\frac{\l}{2}}\text{Tr}\left(\left(
\frac{1}{n}\Phi(s)\Phi(s)^*-I_p\right)^\l\right).$$
We prove
\begin{theorem} \label{moment_SC}
If the conditions (i)-(iii) of Theorem \ref{thm:SC} are satisfied, then for $4 \leq \l^2 < \min\{p, \frac{N}{2}\}$, we have
\begin{equation*}
\E(\AIl(s), \OpI)=\begin{cases}
O_{\l}\left(\frac{c^\l}{\sqrt{p}}+\sqrt{\frac{p}{n}}\right) & \mbox{ if } \l\text{ is odd},\\
\frac{2}{\l+2}\binom{\l}{\l/2}+O_{\l}\left(\frac{c^\l}{p}+\frac{n}{N}+\frac{p}{n}
\right) & \mbox{ if } \l \text{ is even}.
\end{cases}
\end{equation*}
Here the constant implied in the big-O term depends only on the parameter $\l$.
\end{theorem}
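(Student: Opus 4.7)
My approach follows the moment method used in the proof of Theorem \ref{thm:xia}, adapted to the centered, rescaled matrix $\G_I(s) = \sqrt{n/p}(\G(s) - I_p)$. The key preliminary observation is that each $v \in \ep(\C)$ has $\langle v,v\rangle = n$, so the diagonal of $\G(s) - I_p$ vanishes. Expanding $\tr(\G_I(s)^\l)$ and then taking expectation over $\OpI$ and grouping by $\Sp$-orbits, I obtain
\[\E(\AIl(s), \OpI) = \frac{1}{p^{\l/2+1} n^{\l/2}} \sum_{[\ga] \in \Pi'_{\l,p}/\Sp} \frac{p!}{(p-v_\ga)!}\, \E(\om_\ga(s), \OIg),\]
where $\Pi'_{\l,p} \subset \Pi_{\l,p}$ consists of the ``no-stutter'' closed paths, i.e.\ those with $\ga(j) \neq \ga(j+1)$ for all $j$. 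To connect the injective expectation $\E(\om_\ga(s), \OIg)$ to the uniform expectation $\Wg = \E(\om_\ga(s), \Og)$ estimated by Lemma \ref{le:0}, I would write the injective sum as the uniform sum minus the non-injective contributions, and use $N^{v_\ga}(N-v_\ga)!/N! = 1 + O_\l(v_\ga^2/N)$ (valid when $v_\ga^2 < N/2$) to express
\[\E(\om_\ga(s), \OIg) = \Wg + R_\ga,\]
with $R_\ga$ encoding both the multiplicative correction and the non-injective remainder.

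The main term will come from the double-tree classes $\ga \in \Ga$ of Lemma \ref{le:0}, for which $v_\ga = \l/2+1$ and $\Wg = n^{\l/2}$. The number of such classes in $\Pi'_{\l,p}/\Sp$ equals the Catalan number $\frac{2}{\l+2}\binom{\l}{\l/2}$ via the standard bijection with Dyck paths of length $\l$, and combined with $\frac{p!}{(p-\l/2-1)!} = p^{\l/2+1}(1+O_\l(\l^2/p))$ this produces the advertised main term when $\l$ is even; no double tree exists for odd $\l$. For each non-double-tree class I would use two bounds in parallel. Condition (iii) together with the no-stutter property gives the direct bound $|\E(\om_\ga(s), \OIg)| \le c^\l n^{\l/2}$, yielding contribution of order $c^\l p^{v_\ga - \l/2 - 1}$ that dominates when $v_\ga$ is small and produces the $c^\l/p$ (even $\l$) and $c^\l/\sqrt{p}$ (odd $\l$, realised at $v_\ga = (\l+1)/2$) error terms. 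On the other hand, Lemma \ref{le:0} gives $\Wg = O_\l(n^{\l-v_\ga})$, yielding contribution of order $(p/n)^{v_\ga - \l/2}/p$ that dominates when $v_\ga$ is large; the worst single-cycle case $v_\ga = \l$ produces exactly the $p/n$ term for even $\l$ (realised at $\l = 4$) and $\sqrt{p/n}$ for odd $\l$ (realised at $\l = 3$). The residual $R_\ga$ itself contributes at most $O_\l(\l^2/N)$ per class, which is absorbed into the $n/N$ error by condition (i).

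The main obstacle I anticipate is controlling the non-injective part of $R_\ga$ rigorously. A non-injective $s$ that identifies two vertices of $V_\ga$ which happen to be adjacent in the walk turns a factor of modulus at most $c\sqrt{n}$ into one of exact value $n$, so $|\om_\ga(s)|$ may exceed the clean bound $c^\l n^{\l/2}$. I plan to handle this by stratifying the non-injective $s$ according to the induced partition of $V_\ga$: each partition reduces the walk to a closed walk on a smaller vertex set (with a factor of $n$ per collapsed edge of the original walk), to which Lemma \ref{le:0} and condition (iii) can again be applied. Summing over the strata contributes the combinatorial factor $O_\l(v_\ga^2/N)$, which together with condition (i) absorbs everything into the advertised $n/N$ error term.
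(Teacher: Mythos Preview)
Your approach is correct and essentially identical to the paper's: the paper expands over simple (no-stutter) closed paths, applies the direct bound from condition~(iii) for small $v_\ga$ and Lemma~\ref{le:0} combined with an injective-to-uniform comparison (packaged there as Lemma~\ref{4:le:3}, with error $O_\l(n^{\l-v_\ga+2}/N)$, proved by exactly the inclusion--exclusion/stratification you describe) for large $v_\ga$, and extracts the Catalan main term from double trees just as you outline. One minor slip: since $p/n<1$, the Lemma~\ref{le:0} contribution $(p/n)^{v_\ga-\l/2}/p$ is largest for the \emph{smallest} $v_\ga$ in that regime, not at $v_\ga=\l$; the stated $\sqrt{p/n}$ and $p/n$ error terms actually arise at $v_\ga$ just above $1+\l/2$ using the cruder bound $W_\ga\ll_\l n^{\l-v_\ga+1}$ valid for all $\ga$, which is how the paper organizes cases (a)--(d).
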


Noting that the corresponding $\l$-th moments of the Wigner semicircle distribution are given by
$$A_{\l,\mathrm{SC}}=\begin{cases}
0 &\mbox{ if } \l\text{ is odd},\\
\frac{2}{\l+2}\binom{\l}{\l/2} &\mbox{ if } \l\text{ is even},
\end{cases}$$
hence by Theorem \ref{moment_SC}, for any fixed $\l$, as $n \to \infty$ and $p, \frac{N}{n}, \frac{n}{p} \to \infty$,  we have
$$\E(\AIl(s),\OpI) \to A_{\l,\mathrm{SC}}. $$

The rest of this section is devoted to a proof of Theorem \ref{moment_SC}.

\subsection{Problem Setting Up}



\begin{definition} \label{def:simple}
A closed path $\ga: [0\isep\l] \to [1 \isep p]$ is called \tb{simple} if it satisfies $\ga(j) \neq \ga(j+1) \ \all j$.
\end{definition}

Denote by $\Pi_{\l,p}'$ the set of all closed simple paths $\ga: [0\isep\l] \to [1 \isep p]$. This is a subset of $\Pi_{\l,p}$ appearing in Section \ref{sec:2}. Since all the diagonal entries of $\G_I(s)$ are zero, we can expand the expression of the trace in $\AIl(s)$ as
$$\AIl(s)=\frac{1}{p}\left(\frac{1}{np}\right)^{\frac{\l}{2}}\sum_{\ga \in \Pi_{\l,p}'} \om_\ga(s),$$
where $\om_\ga(s)$ is already defined in (\ref{eq:Om(s)}).

Similar to the first main step in Section \ref{sec:2} (see also Section III of \cite{Xia}) we can write
$$\E(\AIl(s), \OpI)=\frac{1}{p}\left(\frac{1}{np}\right)^{\frac{\l}{2}}\sum_{\ga \in \Pi_{\l,p}'/\Sp}\frac{p!}{(p-\vg)!}\E(\om_\ga(s), \OpI),$$
where
\[V_\gamma=\gamma\left([0 \isep l]\right), \quad v_\gamma=\# V_\gamma  \leq \l,\]
and $\Pi_{\l,p}'/\Sp$ is the set of equivalence classes of simple closed paths of $\Pi_{\l,p}'$ under the equivalence relation
$$\ga_1 \sim \ga_2 \iff \ga_1=\sigma \circ \ga_2 \ \exists \, \sigma \in \Sp.$$

We remark that
\[\E(\om_\ga(s), \OpI)=\E(\om_\ga(s), \Omega_{I}(V_{\ga})),\]
where $\Omega_{I}(V_{\ga})$ is the uniform probability space of all injective maps from $V_{\ga}$ to $\mathcal{D}$.

\subsection{Proof of Theorem \ref{moment_SC}}
Since $s$ is injective, $\ga$ is simple, so $\ga(j) \neq \ga(j+1) \ \all j$, from (\ref{1:srip}), we have
\begin{equation}\label{eq:1-StRIP2}
|\E(\om_\ga(s),\OpI)| \le c^\l n^{\frac{\l}{2}}.
\end{equation}
By Lemma \ref{4:le:3} in Section \ref{appendix} {\bf Appendix} we have another estimate:
\begin{equation}\label{eq:SCE2}
\E(\om_\ga(s),\OpI)=\E(\om_\ga(s),\Omega_p)+O_{\l} \left(\frac{n^{\l-\vg+2}}{N}\right).
\end{equation}
Define
\begin{equation*}
\beta_\ga:=\frac{1}{p}\left(\frac{1}{np}\right)^{\frac{\l}{2}}\frac{p!}{(p-\vg)!}\E(\om_\ga(s), \OpI),
\end{equation*}
hence we have
\begin{equation*} \label{eq:SCE16}
\E(\AIl(s), \OpI)=\sum_{\substack{\ga \in \Pi_{\l,p}'/\Sp}} \beta_\ga.
\end{equation*}
From (\ref{eq:1-StRIP2}), (\ref{eq:SCE2}) and Lemma \ref{le:0} we can summarize the estimates of $\beta_{\ga}$ as follows:
\begin{eqnarray*}
\begin{array}{llll}
(a).&\beta_\gamma \ll_{\l} \frac{c^{\l}}{\sqrt{p}} &:& v_\gamma <1+\frac{\l}{2},\\
(b).&\beta_\gamma \ll_{\l} \sqrt{\frac{p}{n}} \left(1+\frac{n}{N}\right) &:& v_\gamma >1+\frac{\l}{2},\\
(c).&\beta_\gamma \ll_{\l} \frac{1}{n} \left(1+\frac{n^2}{N}\right) &:& v_\gamma =1+\frac{\l}{2}, \ga \notin \Gamma,\\
(d).&\beta_\gamma =1+ O_{\l}\left(\frac{1}{p}+\frac{n}{N}\right) &:& v_\gamma =1+\frac{\l}{2}, \ga \in \Gamma.
\end{array}
\end{eqnarray*}
Note that (c) and (d) may appear only when $\l$ is even. Using
\begin{equation*}\label{bound}
\sum_{\substack{\ga \in \Pi_{\l,p}'/\Sp \\ \vg=v}} 1 < v^\l \leq \l^\l, \quad \forall \, v \le \l,
\end{equation*}
and the identity (see \cite{Xia} or \cite[Lemma 2.4]{RMT})
\begin{equation*} \label{eq:SCE35}
\sum_{\substack{\ga \in \Ga \\ \vg=1+\frac{\l}{2}}} 1=\frac{2}{\l+2}\binom{\l}{\frac{\l}{2}},
\end{equation*}
we obtain the desired estimates on $\E(\AIl(s), \OpI)$. This completes the proof of Theorem \ref{moment_SC}. $\qed$

\section{Proof of Theorem \ref{thm:SC}}\label{sec:4}
To complete the proof of Theorem \ref{thm:SC}, by the moment convergence theorem \cite[p.24]{RMT}, it suffices to prove the following result.
\begin{theorem}\label{VarSC2}
Assume the conditions of Theorem \ref{thm:SC} are satisfied. Then
$$\Var(\AIl(s),\OpI)=O_{\l}\left(\frac{c^{2\l}}{p^2}+\frac{1}{pn}+\frac{n}{pN}\right).$$
\end{theorem}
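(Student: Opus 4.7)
The proof will follow the framework of the moment calculation in Theorem \ref{moment_SC}, now tracking two simple paths simultaneously. First, I would expand
\[
\Var(\AIl(s),\OpI) = \frac{1}{p^2(np)^\l} \sum_{\ga_1, \ga_2 \in \Pi_{\l,p}'} \Cov\!\left(\om_{\ga_1}(s), \ol{\om_{\ga_2}(s)}, \OpI\right),
\]
collect the sum into equivalence classes $[\ga_1, \ga_2]$ under the diagonal $\Sp$-action (yielding multiplicity $\frac{p!}{(p-\vgu)!}$ per class), and observe that each summand depends only on $s|_{\Vgu}$, so the expectation reduces to one over $\Omega_I(\Vgu)$.

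Next, I would transfer from $\OpI$ to $\Op$ via a bivariate analogue of Lemma \ref{4:le:3},
\[
\E\!\left(\om_{\ga_1}\ol{\om_{\ga_2}}, \OpI\right) = \Wgu + O_\l\!\left(\tfrac{n^{2\l-\vgu+2}}{N}\right),
\]
combined with the single-path version $\E(\om_{\ga_i},\OpI) = W_{\ga_i} + O_\l(n^{\l-v_{\ga_i}+2}/N)$ from Lemma \ref{4:le:3}. Expanding the covariance yields $\Cov_I = \Wguu + \mathcal{E}$, where $\Wguu = \Wgu - W_{\ga_1}W_{\ga_2}$ is the $\Op$-covariance (handled by Lemma \ref{le:2}: zero when $\vgi \in \{0,1\}$, and $O_\l(n^{2\l-\vgu})$ otherwise) and $\mathcal{E}$ collects the injectivity corrections of size at most $O_\l(n^{2\l-\vgu+2}/N)$, after using Lemma \ref{le:0} to bound the companion factors $W_{\ga_i}$.

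Third, I would split the sum over equivalence classes at the threshold $\vgu = \l+1$. In the large regime $\vgu \ge \l+1$, the prefactor $\frac{p!/(p-\vgu)!}{p^2(np)^\l}$ behaves like $\frac{p^{\vgu-\l-2}}{n^\l}$; combining with the algebraic bound $n^{2\l-\vgu}$ gives contribution $\frac{1}{pn}$ in the critical case $\vgu = \l+1$ (the analogue of the double-tree configurations in Theorem \ref{moment_SC}) and smaller contributions for $\vgu > \l+1$, while the injectivity correction analogously contributes at most $\frac{n}{pN}$. In the small regime $\vgu \le \l$, I would instead invoke the a priori bound from condition (iii): since each $\om_\ga(s)$ is a product of $\l$ inner products of modulus $\le c\sqrt{n}$ by (\ref{1:srip}), $|\Cov_I| \le 2c^{2\l}n^\l$, contributing at most $\frac{c^{2\l}}{p^{\l+2-\vgu}} \le \frac{c^{2\l}}{p^2}$ per class. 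Since the total number of equivalence classes is $O_\l(1)$, summing yields the claimed estimate.

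The main obstacle will be controlling the cross-term contributions $\delta_i W_{\ga_j}$ and $\delta_1 \ol{\delta_2}$ inside $\mathcal{E}$ in the large regime, particularly when $\vgi$ is small so that the two paths are nearly independent. For such pairs the naive product bound may exceed the target $\frac{n}{pN}$, and one should instead bound $|\Cov_I|$ directly by an inclusion-exclusion on $\OpI$: the injectivity constraint over $\D$ only introduces a correction of relative order $\vgu^2/N$ compared with factorization, which combined with $|\om_{\ga_i}(s)| \le c^\l n^{\l/2}$ gives $|\Cov_I| = O_\l(c^{2\l}n^\l/N)$, absorbed into the $\frac{c^{2\l}}{p^2}$ term since $p \le N$. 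This case analysis, parallel to the graph-theoretic arguments underlying Lemma \ref{le:0}, is the chief source of technical complication in the proof.
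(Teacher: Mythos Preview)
Your overall framework—the diagonal $\Sp$-action, the reduction to $\Wguu$ via Lemmas \ref{4:le:3} and \ref{4:le:4}, the split at $\vgu=\l+1$, and the use of (\ref{1:srip}) in the small regime—matches the paper's proof. You also correctly locate the obstruction: when $\vgi=0$, the cross term $W_{\ga_1}\cdot O_\l(n^{\l-v_{\ga_2}+2}/N)$ is of order $n^{2\l-\vgu+3}/N$, one power of $n$ too large, because $v_{\ga_1}+v_{\ga_2}=\vgu$ rather than $\vgu+1$.

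However, your proposed remedy for this case does not close the gap. Bounding $|\Cov_I|$ by $O_\l(c^{2\l}n^\l/N)$ via the pointwise estimate $|\om_{\ga_i}(s)|\le c^\l n^{\l/2}$ is valid, but after multiplying by $\frac{p!}{(p-\vgu)!}\,p^{-\l-2}n^{-\l}$ it contributes $c^{2\l}p^{\vgu-\l-2}/N$. Your absorption ``since $p\le N$'' only handles $\vgu=\l+1$; for $\vgu$ near $2\l$ (e.g.\ two disjoint simple cycles, each on $\l$ vertices) this becomes $c^{2\l}p^{\l-2}/N$, which is unbounded under the hypotheses (take $p=n/\log n$, $N=n\log n$, any $\l\ge 4$). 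The paper instead proves the structural bound $\beta_{\ga_1,\ga_2}\ll_\l n^{2\l-\vgu+2}/N$ even when $\vgi=0$: one writes $\beta_{\ga_1,\ga_2}=A-B$ with $A=(1-N_0/(N_1N_2))\,\E(\om_{\ga_1}\ol{\om_{\ga_2}},\OpI)$ and $B$ a sum over $s\in\Om_I(V_{\ga_1})\times\Om_I(V_{\ga_2})\setminus\Om_I(\Vgu)$, then applies inclusion--exclusion to $B$ over gluings $U\subset V_{\ga_1}\times V_{\ga_2}$. Each glued pair $(\ga_{1U},\ga_{2U})$ now satisfies $v_{\ga_{1U}\cap\ga_{2U}}\ge 1$, so Lemma \ref{4:le:4} together with Lemma \ref{le:1} supplies the needed power of $n$. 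The point is that after the inclusion--exclusion you must re-invoke the graph-theoretic counting, not fall back on condition~(iii).
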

This section is devoted to a proof of theorem \ref{VarSC2}.

\subsection{Problem setting up}

By definition,
\begin{align*}
\Var(\AIl(s),\OpI)&=\E(|\AIl(s)|^2,\OpI)-|\E(\AIl(s),\OpI)|^2.
\end{align*}
Similar to the first main step in Section \ref{sec:2}, we can write


\begin{equation}\label{VarSC3}
\Var(\AIl(s),\OpI)=\sum_{(\ga_1,\ga_2) \in {\Pi'_{\l,p}}^2/\Sp} \frac{1}{p^2}\left(\frac{1}{np}\right)^\l\frac{p!}{(p-\vgu)!} \, \beta_{\ga_1,\ga_2},
\end{equation}
where
\begin{equation*}\label{b2}
\beta_{\ga_1,\ga_2}:=\E(\om_{\ga_1}(s)\overline{\om_{\ga_2}(s)},\OpI)-\E(\om_{\ga_1}(s),\OpI)\overline{\E(\om_{\ga_2}(s),\OpI)}.
\end{equation*}
Here ${\Pi'_{\l,p}}^2/\Sp$ denotes the set of equivalence classes of ordered pairs of simple closed paths in $\Pi'_{\l,p}$ under the equivalence relation
$$(\ga_{11},\ga_{21}) \sim (\ga_{12},\ga_{22}) \iff (\ga_{11},\ga_{21})=(\sigma \circ \ga_{12},\sigma \circ \ga_{22}) \ \exists \sigma \in \Sp.$$

For simplicity, for $\gamma_1,\gamma_2 \in \Pi_{\l,p}'$, we define
\begin{eqnarray*}
\Vgu:=V_{\ga_1} \cup V_{\ga_2}, & \vgu=\#\Vgu \, ,  \\
\Vgi:=V_{\ga_1} \cap V_{\ga_2}, & \vgi=\#\Vgi \, .
\end{eqnarray*}

\subsection{Study of $\beta_{\ga_1,\ga_2}$}

First, by the condition in (\ref{1:srip}), we easily obtain
\begin{equation}\label{b3}
|\beta_{\ga_1,\ga_2}| \le2 c^{2\l}n^{\l}.
\end{equation}
Next, we have the following estimation:
\begin{lemma}\label{le:5}
Assume $\db \geq 5$ and $4 \leq \l^2 \leq \frac{N}{8}$. Then
\begin{equation} \label{4:beta}
\beta_{\ga_1,\ga_2} \ll_{\l} n^{2 \l-\vgu+2} \left(\frac{1}{n^2}+\frac{1}{N}\right).
\end{equation}
\end{lemma}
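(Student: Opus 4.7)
The plan is to decompose $\beta_{\ga_1,\ga_2}$ into a ``full-space main term'' plus two ``injectivity-correction'' terms. Using the identities $\Wgu = \E(\om_{\ga_1}\overline{\om_{\ga_2}},\Op)$ from (\ref{eq:wrr}) and $W_{\ga_i} = \E(\om_{\ga_i},\Op)$ from (\ref{eq:wr}), write
\begin{align*}
\beta_{\ga_1,\ga_2} &= \bigl[\Wgu - W_{\ga_1}\overline{W_{\ga_2}}\bigr] \\
&\quad + \bigl[\E(\om_{\ga_1}\overline{\om_{\ga_2}},\OpI) - \Wgu\bigr] \\
&\quad - \bigl[\E(\om_{\ga_1},\OpI)\overline{\E(\om_{\ga_2},\OpI)} - W_{\ga_1}\overline{W_{\ga_2}}\bigr],
\end{align*}
and bound each of the three brackets separately by $O_\l\bigl(n^{2\l-\vgu+2}(n^{-2}+N^{-1})\bigr)$.

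The first bracket is the main algebraic piece and is handled directly by Lemma~\ref{le:2}: it vanishes for $\vgi \in \{0,1\}$ and is $O_\l(n^{2\l-\vgu}) = O_\l(n^{2\l-\vgu+2}\cdot n^{-2})$ for $\vgi \geq 2$, accounting for the $1/n^2$ contribution in (\ref{4:beta}).

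For the two injectivity-correction brackets I would invoke Lemma~\ref{4:le:3} from the Appendix, which is the single-path estimate $\E(\om_\ga,\OpI) = W_\ga + O_\l(n^{\l-\vg+2}/N)$ behind (\ref{eq:SCE2}). When $\vgi \geq 1$, the product $\om_{\ga_1}\overline{\om_{\ga_2}}$ coincides with $\om_\ga$ for the combined closed path $\ga = \ga_1 \cup \overline{\ga}_2$ on $\vgu$ vertices with $2\l$ edges (the key observation already used in the proof of Lemma~\ref{le:1}), so the second bracket is directly $O_\l(n^{2\l-\vgu+2}/N)$. For the third bracket, expanding $\E(\om_{\ga_i},\OpI) = W_{\ga_i} + O_\l(n^{\l-v_{\ga_i}+2}/N)$ and using the \emph{sharp} bound $W_{\ga_i} \ll_\l n^{\l-v_{\ga_i}+1}$ from Lemma~\ref{le:0} (the trivial bound from (\ref{1:srip}) would be too weak), the cross and quadratic correction terms are each $O_\l(n^{2\l-\vgu-\vgi+3}/N)$, which is absorbed into $O_\l(n^{2\l-\vgu+2}/N)$ provided $\vgi \geq 1$.

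The main obstacle is the degenerate case $\vgi = 0$, where the naive cross-term bound above is a factor of $n$ too large. Here one must exploit that $\Wgu = W_{\ga_1}\overline{W_{\ga_2}}$ \emph{exactly}, because the $\Op$-expectation factorizes over the disjoint vertex sets $V_{\ga_1}$ and $V_{\ga_2}$, so the injectivity corrections in the second and third brackets must partially cancel. Making this cancellation rigorous requires a short inclusion-exclusion between the injective space $\Omega_I(\Vgu)$ and the product $\Omega_I(V_{\ga_1}) \times \Omega_I(V_{\ga_2})$: the discrepancy comes from ``collision'' events where an image coordinate of $V_{\ga_1}$ coincides with one of $V_{\ga_2}$, and each such collision reduces the problem to a walk on at most $\vgu - 1$ distinct vertices, yielding the missing $n/N$ saving. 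This is the technically delicate step of the lemma; the $\vgi \geq 1$ cases are mechanical once Lemma~\ref{4:le:3} is granted.
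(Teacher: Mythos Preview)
Your proposal is correct and follows essentially the same strategy as the paper. For $\vgi \ge 1$ the paper quotes the two-path analogue Lemma~\ref{4:le:4} directly rather than your device of applying Lemma~\ref{4:le:3} to the concatenated closed walk $\ga_1\cup\overline{\ga}_2$, but these are the same estimate; and for the delicate case $\vgi=0$ the paper does exactly what you describe---it rewrites $\beta_{\ga_1,\ga_2}$ as the normalization discrepancy $(1-N_0/(N_1N_2))\,\E(\om_{\ga_1}\overline{\om_{\ga_2}},\OpI)$ minus the sum over $\Omega_I(V_{\ga_1})\times\Omega_I(V_{\ga_2})\setminus\Omega_I(\Vgu)$, then bounds the latter by inclusion--exclusion over collision events, each of which glues a vertex of $V_{\ga_1}$ to one of $V_{\ga_2}$ and so lands in the already-handled regime $\vgi\ge 1$ with one fewer vertex.
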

\begin{proof}[Proof of Lemma \ref{le:5}]
If $\vgi \ge 1$, applying Lemma \ref{4:le:4} and Lemma \ref{4:le:3} in Section \ref{appendix} {\bf Appendix} directly to the terms $\E(\om_{\ga_1}(s)\overline{\om_{\ga_2}(s)},\OpI)$ and $\E(\om_{\ga_i}(s),\OpI)$ ($i=1,2$) respectively, then using Lemmas \ref{le:0}-\ref{le:2} in Section \ref{sec:2}, also observing that $v_{\ga_1}+v_{\ga_2}=\vgu+ \vgi \ge \vgu +1$, we obtain the desired result by a straightforward computation.

Now assume $\vgi = 0$. We remark that if we use the above approach, we can only obtain
\[ \beta_{\ga_1,\ga_2} \ll_{\l} n^{2 \l-\vgu+2} \left(\frac{n}{N}\right), \]
which falls short of our expectation (\ref{4:beta}). So we adopt a different method.

Denote
\[N_i=\#\Om_I(V_{\ga_i})=\frac{N!}{(N-v_{\ga_i})!}, \quad i=1,2,\]
and
\[N_0=\#\Om_I(\Vgu)=\frac{N!}{(N-v_{\ga_1,\ga_2})!}. \]

By using definition, we can rewrite $\beta_{\ga_1,\ga_2}$ as
\begin{eqnarray*}
\beta_{\ga_1,\ga_2} =A-B, \end{eqnarray*}
where
\begin{align}\label{b4}
A& =\left(1-\frac{N_0}{N_1N_2}\right) \E(\om_{\ga_1}(s)\overline{\om_{\ga_2}(s)},\OpI)\nono\\
B&=\frac{1}{N_1N_2}\left(\sum_{\substack{s \in \Om(\Vgu)\\s|_{V_{\ga_1}} \in \Om_I(V_{\ga_1})\\s|_{V_{\ga_2}} \in \Om_I(V_{\ga_2})}}- \sum_{s \in \Om_I(\Vgu)}\right)\om_{\ga_1}(s)\overline{\om_{\ga_2}(s)}. \nono
\end{align}

As for the first term $A$, since $0 \le \vgu=v_{\ga_1}+v_{\ga_2} \le 2 \l$, we have  $1-\frac{N_0}{N_1N_2} \ll_{\l} \frac{1}{N}$. By Lemma \ref{4:le:4} and noting that
\[\Wgu = W_{\ga_1} W_{\ga_2} \le n^{\l-v_{\ga_1}+1}n^{\l-v_{\ga_2}+1}=n^{l-v_{\ga_1,\ga_2}+2},\]
we can obtain easily
\[A \ll_{\l} \frac{1}{N} \left(n^{2 \l -\vgu+2}+ \frac{n^{2 \l -\vgu+2}}{N}\right) \ll_{\l} n^{2 \l -\vgu+2} \left(\frac{1}{N}\right). \]

As for $B$, first, we can rewrite it as
\[B=\frac{1}{N_1N_2} \sum_{s \in \Om_I(V_{\ga_1}) \times \Om_I(V_{\ga_2})\setm\Om_I(\Vgu)}\om_{\ga_1}(s)\overline{\om_{\ga_2}(s)}. \]
Here subscript means that we sums over all $s \in \Om_I(V_{\ga_1}) \times \Om_I(V_{\ga_2})$ such that there are $a \in V_{\ga_1}$ and $b \in V_{\ga_2}$ with $s(a)=s(b)$.

Let $Q=\{(a,b): a \in V_{\ga_1}, b \in V_{\ga_2}\}$. For any non-empty subset $U \subset Q$, we can define corresponding new maps $\ga_{1U}$ and $\ga_{2U}$ by gluing the vertices corresponding to $a_k$ and $b_k$ together whenever $(a_k,b_k) \in U$. For these new maps, clearly we have
$$v_{\ga_{1U},\ga_{2U}} \leq \vgu-1.$$
Moreover, since $\ga_{1U}$ and $\ga_{2U}$ share the new vertex formed by gluing $a_k$ and $b_k$ together, we also have $v_{\ga_{1U}\cap\ga_{2U}} \geq 1$. Hence we can apply Lemma \ref{4:le:4} and Lemma \ref{le:1} to obtain
\begin{eqnarray*} \left|\sum_{s \in \Om_I(V_{\ga_{1U},\ga_{2U}})} \om_{\ga_{1U}}(s)\overline{\om_{\ga_{2U}}(s)}\right| &\ll_{\l} & N^{v_{\ga_{1U},\ga_{2U}}} \left|\E(\om_{\ga_{1U}}(s)\overline{\om_{\ga_{2U}}(s)},\OpI) \right| \\
& \ll_{\l}& N^{\vgu} n^{2\l-\vgu+2}\left(\frac{1}{N}+\frac{n}{N^2}\right).
\end{eqnarray*}
Then by the inclusion-exclusion principle, we conclude that
\begin{align}\label{SE8}
\left|\sum_{s \in \Om_I(V_{\ga_1}) \times \Om_I(V_{\ga_2})\setm\Om_I(\Vgu)}\om_{\ga_1}(s)\overline{\om_{\ga_2}(s)}\right| & \leq \sum_U \left|\sum_{s \in \Om_I(V_{\ga_{1U},\ga_{2U}})} \om_{\ga_{1U}}(s)\overline{\om_{\ga_{2U}}(s)}\right|\nono\\
& \ll_{\l} N^{\vgu} n^{2\l-\vgu+2}\left(\frac{1}{N}+\frac{n}{N^2}\right). \nono
\end{align}
From this we obtain
\[B \ll_{\l} n^{2\l-\vgu+2}\left(\frac{1}{N}+\frac{n}{N^2}\right). \]
Combining the estimates of $A$ and $B$ yields the desired result for $\beta_{\ga_1,\ga_2}$. This completes the proof of Lemma \ref{le:5}.
\end{proof}

\subsection{Proof of Theorem \ref{VarSC2}}

For simplicity, define
\[\alpha_{\ga_1,\ga_2}=\frac{1}{p^2}\left(\frac{1}{np}\right)^\l\frac{p!}{(p-\vgu)!} \, \beta_{\ga_1,\ga_2}. \]
From (\ref{b3}) and Lemma \ref{le:5} we summarize the estimates of $\alpha_{\ga_1,\ga_2}$ as follows:
\begin{align} \label{4:alpha1}
\alpha_{\ga_1,\ga_2} &\ll_{\l} c^{2 \l} p^{\vgu-\l-2},\\
\label{4:alpha2}
\alpha_{\ga_1,\ga_2} &\ll_{\l} \left(pn^{-1}\right)^{\vgu-\l-2} \left(n^{-2}+N^{-1}\right).
\end{align}
We split $\Var(\AIl(s),\OpI)$ in (\ref{VarSC3}) into two terms
\begin{equation}\label{VarSC4}
\Var(\AIl(s),\OpI)=\sum_{\substack{(\ga_1,\ga_2) \in {\Pi'_{\l,p}}^2/\Sp\\\vgu \le \l}} \alpha_{\ga_1,\ga_2}+\sum_{\substack{(\ga_1,\ga_2) \in {\Pi'_{\l,p}}^2/\Sp\\\vgu \geq \l+1}} \alpha_{\ga_1,\ga_2}.
\end{equation}
For the first term, using (\ref{4:alpha1}) and the trivial bound
\begin{equation*}\label{bound2}
\sum_{\substack{(\ga_1,\ga_2) \in {\Pi'_{\l,p}}^2/\Sp \\ \vgu=v}} 1 < v^{2\l} \leq (2\l)^{2\l},
\end{equation*}
we easily obtain
\begin{equation}\label{VarSC5}
\sum_{\substack{(\ga_1,\ga_2) \in {\Pi'_{\l,p}}^2/\Sp\\\vgu \le \l}} \alpha_{\ga_1,\ga_2} \ll_{\l} \frac{c^{2\l}}{p^2}\,\, .
\end{equation}
For the second term of (\ref{VarSC4}), using (\ref{4:alpha2}) we can also obtain
\begin{align}\label{VarSC6}
\sum_{\substack{(\ga_1,\ga_2) \in {\Pi'_{\l,p}}^2/\Sp\\\vgu \geq \l+1}} \alpha_{\ga_1,\ga_2} &\ll_{\l} \frac{1}{p}\left(\frac{1}{n}+\frac{n}{N}\right).
\end{align}

Putting (\ref{VarSC5}) and (\ref{VarSC6}) into (\ref{VarSC4}) gives the desired result for $\Var(\AIl(s),\Om_{p,I})$. This completes the proof of Theorem \ref{VarSC2}. Now Theorem \ref{thm:SC} is proved. $\qed$

\section{Appendix: two lemmas}\label{appendix}

\subsection{Some lemmas}

Now we prove two technical lemmas which were used in Sections \ref{sec:3} and \ref{sec:4} before.

\begin{lemma}\label{4:le:3}
Assume that $\db \geq 5$. Then for all $\l$ such that $4 \leq \l^2 \leq \frac{N}{2}$, we have
\begin{equation*}
\E(\om_\ga(s),\OpI)=\E(\om_\ga(s),\Omega_p)+O_{\l}\left(\frac{n^{\l-\vg+2}}{N}\right).
\end{equation*}
Here the constant implied in the symbol $O_{\l}$ depends only on the parameter $\l$.
\end{lemma}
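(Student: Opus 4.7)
The plan is to compare $\E(\om_\ga(s),\OpI)$ with $\E(\om_\ga(s),\Op)=W_\ga$ by expressing both as weighted sums over the common space $\Om(V_\ga)$ of all maps $V_\ga\to\D$ and tracking the discrepancy caused by dropping the injectivity constraint. Let $v=v_\ga$, and write
\[S_{\mathrm{all}}=\sum_{s\in\Om(V_\ga)}\om_\ga(s),\qquad S_{\mathrm{inj}}=\sum_{s\in\Om_I(V_\ga)}\om_\ga(s),\]
and set $M=N^{v}/\#\Om_I(V_\ga)=\prod_{j=0}^{v-1}(1-j/N)^{-1}$. Then $W_\ga=S_{\mathrm{all}}/N^v$ and $\E(\om_\ga,\OpI)=M\,S_{\mathrm{inj}}/N^v$, which yields the identity
\[\E(\om_\ga,\OpI)-W_\ga=(M-1)W_\ga\;-\;\frac{M}{N^v}\bigl(S_{\mathrm{all}}-S_{\mathrm{inj}}\bigr).\]
This splits the error into a multiplicative-normalization term and a non-injective correction term, which I would control separately.

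For the normalization piece, a routine Taylor bound handles $M-1$. Since $v\le\l$ and $\l^2\le N/2$, every factor $(1-j/N)^{-1}$ lies in $[1,2]$, so $M\ll_\l 1$ and $M-1=O_\l(v^2/N)=O_\l(1/N)$. Lemma \ref{le:0} gives the unconditional bound $|W_\ga|\le n^{\l-v+1}$, hence $(M-1)W_\ga=O_\l(n^{\l-v+1}/N)$, already well inside the target $O_\l(n^{\l-v+2}/N)$.

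The main work lies in bounding $S_{\mathrm{all}}-S_{\mathrm{inj}}$. I use a union bound over the unordered pairs of vertices at which injectivity could fail:
\[|S_{\mathrm{all}}-S_{\mathrm{inj}}|\le\sum_{\{a,b\}\subset V_\ga}\Bigl|\sum_{\substack{s\in\Om(V_\ga)\\ s(a)=s(b)}}\om_\ga(s)\Bigr|.\]
For each pair, the constraint $s(a)=s(b)$ lets me identify the inner sum with an unconstrained sum over $\Om(V_{\ga'})$, where $\ga'$ is the closed path obtained by gluing the vertices $a$ and $b$ into a single vertex (the same gluing construction used in the proof of Lemma \ref{le:5} for the case $\vgi=0$). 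The glued path has $v_{\ga'}=v-1$, so the inner sum equals $N^{v-1}W_{\ga'}$, and a second application of Lemma \ref{le:0} yields $|W_{\ga'}|\ll n^{\l-(v-1)+1}=n^{\l-v+2}$. Since there are $\binom{v}{2}\le\binom{\l}{2}=O_\l(1)$ pairs, we obtain $|S_{\mathrm{all}}-S_{\mathrm{inj}}|\ll_\l N^{v-1}n^{\l-v+2}$, whence $\frac{M}{N^v}|S_{\mathrm{all}}-S_{\mathrm{inj}}|\ll_\l n^{\l-v+2}/N$. Combining with the normalization piece gives the claimed estimate.

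The only subtlety to verify is that gluing two vertices of $\ga$ always produces a closed path to which Lemma \ref{le:0} legitimately applies, even when the glued path is no longer simple (consecutive vertices may coincide). This is not an obstruction, since Lemma \ref{le:0} is stated for arbitrary closed paths in $\Pi_{\l,p}/\Sp$ rather than only for simple ones, and the bound $W_{\ga'}\ll n^{\l-v_{\ga'}+1}$ holds uniformly. Everything else is routine bookkeeping: the $O_\l$ constants absorb the number of pairs and the Taylor bound on $M$, and the two contributions are merged using $n^{\l-v+1}\le n^{\l-v+2}$.
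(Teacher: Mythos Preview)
Your decomposition and overall strategy match the paper's exactly: split into a normalization piece $(M-1)W_\ga$ and a non-injective correction, handle the first by Lemma~\ref{le:0} and the second by gluing vertices and reapplying Lemma~\ref{le:0}. There is, however, a genuine gap in the second piece. The inequality
\[
|S_{\mathrm{all}}-S_{\mathrm{inj}}|\le\sum_{\{a,b\}\subset V_\ga}\Bigl|\sum_{\substack{s\in\Om(V_\ga)\\ s(a)=s(b)}}\om_\ga(s)\Bigr|
\]
is \emph{not} a valid ``union bound'': the summand $\om_\ga(s)$ is complex-valued, and for signed or complex weights the sum over a union of overlapping sets need not be dominated by the sum of the absolute values of the partial sums. (A two-set toy example already shows that the left side can exceed the right side when cancellation occurs inside each partial sum.)

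The repair is exactly what the paper does: apply full inclusion--exclusion over the pair set $P=\{(z_i,z_j):i<j\}$, obtaining
\[
\Bigl|\sum_{s\in\Om(V_\ga)\setminus\Om_I(V_\ga)}\om_\ga(s)\Bigr|
\le\sum_{\emptyset\ne T\subset P}\Bigl|\sum_{s\in\bigcap_{(a,b)\in T}\Om_{(a,b)}}\om_\ga(s)\Bigr|,
\]
and then observe that each inner sum equals $N^{v_{\ga_T}}W_{\ga_T}$ for the path $\ga_T$ obtained by gluing all pairs in $T$. Since $v_{\ga_T}\le v-1$ for every nonempty $T$, your bound $N^{v-1}n^{\l-v+2}$ applies term by term, and the number of subsets $2^{|P|}\le 2^{\l^2}$ is absorbed into $O_\l(1)$. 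With this correction in place, your argument goes through and is essentially identical to the paper's.
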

\begin{proof}
First, note that
\[|\E(\om_\ga(s),\OpI)|=|\E(\om_\ga(s),\OIg)|,\]
$\OIg$ is the set of all injective maps $s: \Vg \to \D$ endowed with the uniform probability. Define
\begin{equation*}
\tE(\om_\ga(s), \OIg):=\frac{\sum_{s \in \OIg} \om_\ga(s)}{N^{\vg}}.
\end{equation*}
Noting that
\begin{eqnarray} \E(\om_\ga(s), \OIg) &=&\tE(\om_\ga(s), \OIg)\frac{N^{\vg}}{N(N-1)(N-2)\cdots(N-\vg+1)} \nonumber\\
& =& \tE(\om_\ga(s), \OIg) \left(1+O\left(\frac{\l^2}{N}\right)\right), \label{eq:E1E2}
\end{eqnarray}
to prove Lemma \ref{4:le:3}, it suffices to study $\tE(\om_\ga(s), \OIg)$. We write

\begin{equation} \label{eq:SCE3}
\tE(\om_\ga(s), \OIg)=\frac{\sum_{s \in \Og} \om_\ga(s)-\sum_{s \in \Og \setm \OIg} \om_\ga(s)}{N^{\vg}}.
\end{equation}
Here $\Og$ is the set of all maps $s: \Vg \to \D$ endowed with the uniform probability. The first term is precisely $\Wg$ defined in (\ref{eq:wr}). As for the second term, the condition $s \in \Og \setminus \OIg$ is equivalent to $s$ being not injective, that is, there exist $a \ne b \in \Vg$ such that $s(a)=s(b)$. Denote by $\Om_{(a,b)}$ the set of all $s \in \Og$ such that $s(a)=s(b)$. We may order the set $V_\gamma$ as $V_\gamma=\left\{z_i: 1 \le i \le v_\gamma\right\}$ and define $P=\{(z_i,z_j): 1 \leq i < j \leq \vg\}$. Using
$$\Om(\Vg)\setm\OIg=\cup_{(a,b) \in P} \, \Om_{(a,b)},$$
and the inclusion-exclusion principle, we have
\begin{align*}\label{noinject}
\left|\sum_{s \in \Om(\Vg)\setm\OIg} \om_\ga(s)\right|
&\leq \sum_{t=1}^{|P|} \sum_{\substack{(a_1,b_1),\cdots,(a_t,b_t) \in P\\ \text{distinct}}}\left|\sum_{s \in \cap_{m=1}^t \Om_{(a_m,b_m)}} \om_\ga(s)\right|.
\end{align*}

A little thought reveals that the inner summand $\sum_{s \in \cap_{m=1}^t \Om_{(a_m,b_m)}} \om_\ga(s)$ corresponds to the quantity $W_{\gamma_T}$ defined in (\ref{eq:wr}), where the graph $\gamma_T$ is obtained from $\gamma$ by gluing the vertices $a$ and $b$ together for all pairs $(a,b)$ inside the set $T=\left\{(a_m,b_m): 1 \le m \le t \right\}$. More precisely, let $v_{\gamma_T}$ be the number of vertices of $\gamma_T$, then
\begin{equation*}\label{intersect}
\frac{1}{N^{v_{\gT}}} \left|\sum_{s \in \cap_{m=1}^t \Om_{(a_m,b_m)}} \om_\ga(s)\right|= W_{\gT}.
\end{equation*}
Obviously $v_{\gamma_T} \le v_{\gamma}-1$. Applying Lemma \ref{le:0} on $W_{\gamma_T}$ directly, we obtain
\begin{align*}
\left|\sum_{s \in \Om(\Vg)\setm\OIg} \om_\ga(s)\right|
& \ll 2^{\l^2} n^{\l+1} \left(\frac{N}{n}\right)^{v_\gamma-1}.
\end{align*}
Inserting this into (\ref{eq:SCE3}), we obtain
$$\tE(\om_\ga(s), \OIg)=\Wg+O_{\l}\left(\frac{n^{\l-\vg+2}}{N}\right). $$
Noting the relation (\ref{eq:E1E2}), we obtain the desired estimate on $\E(\om_\ga(s), \OIg)$. This completes the proof of Lemma \ref{4:le:3}.
\end{proof}

\begin{lemma} \label{4:le:4} Assume $\db \geq 5$ and $4 \leq \l^2 \leq \frac{N}{8}$. Then
$$\E(\om_{\ga_1}(s)\overline{\om_{\ga_2}(s)},\OpI)=\Wgu+
O_{\l}\left(\frac{n^{2\l-\vgu+2}}{N}\right),$$
where $\Wgu$ is defined in (\ref{eq:wrr}).
\end{lemma}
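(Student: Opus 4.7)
The plan is to closely follow the proof of Lemma \ref{4:le:3}, treating the ordered pair $(\ga_1, \ga_2)$ as a single combined object with vertex set $V_{\ga_1,\ga_2}$ and weight function $\om_{\ga_1}(s)\overline{\om_{\ga_2}(s)}$. First I would observe that $\om_{\ga_1}(s)\overline{\om_{\ga_2}(s)}$ depends on $s$ only through $s|_{V_{\ga_1,\ga_2}}$, so that
\[ \E(\om_{\ga_1}(s)\overline{\om_{\ga_2}(s)}, \OpI) = \E(\om_{\ga_1}(s)\overline{\om_{\ga_2}(s)}, \Omega_I(V_{\ga_1,\ga_2})). \]
I would then introduce the un-normalized counterpart
\[ \tE(\om_{\ga_1}\overline{\om_{\ga_2}}, \Omega_I(V_{\ga_1,\ga_2})) := \frac{1}{N^{v_{\ga_1,\ga_2}}} \sum_{s \in \Omega_I(V_{\ga_1,\ga_2})} \om_{\ga_1}(s)\overline{\om_{\ga_2}(s)}. \]
Since $v_{\ga_1,\ga_2} \le 2\l$ and $\l^2 \le N/8$, the ratio $N^{v_{\ga_1,\ga_2}}/[N(N-1)\cdots(N-v_{\ga_1,\ga_2}+1)]$ equals $1+O(\l^2/N)$, giving $\E = \tE \cdot (1+O(\l^2/N))$.

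Next I would rewrite
\[ \tE = \Wgu - \frac{1}{N^{v_{\ga_1,\ga_2}}} \sum_{s \in \Omega(V_{\ga_1,\ga_2}) \setminus \Omega_I(V_{\ga_1,\ga_2})} \om_{\ga_1}(s)\overline{\om_{\ga_2}(s)}, \]
since the sum of $\om_{\ga_1}\overline{\om_{\ga_2}}$ over all maps in $\Omega(V_{\ga_1,\ga_2})$ equals $N^{v_{\ga_1,\ga_2}} \Wgu$ by the definition of $\Wgu$. The remaining task is to show that the non-injective correction contributes at most $O_\l(n^{2\l-v_{\ga_1,\ga_2}+2}/N)$.

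To bound this correction, I would apply inclusion-exclusion over non-empty subsets $T \subset P := \{(z_i, z_j) : 1 \le i < j \le v_{\ga_1,\ga_2}\}$: writing $\Omega(V_{\ga_1,\ga_2}) \setminus \Omega_I(V_{\ga_1,\ga_2}) = \bigcup_{(a,b) \in P} \Omega_{(a,b)}$, each contribution $\sum_{s \in \bigcap_{(a,b) \in T} \Omega_{(a,b)}} \om_{\ga_1}(s)\overline{\om_{\ga_2}(s)}$ equals $N^{v_{\ga_{1T},\ga_{2T}}} W_{\ga_{1T},\ga_{2T}}$, where $\ga_{1T}, \ga_{2T}$ are the closed walks obtained by gluing the prescribed vertex pairs and $v_{\ga_{1T},\ga_{2T}} \le v_{\ga_1,\ga_2}-1$. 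The estimates on $W_{\ga_{1T},\ga_{2T}}$ from Lemmas \ref{le:0}, \ref{le:1} and \ref{le:2} then give: a single gluing ($|T|=1$) preserving or creating a shared vertex yields $W_{\ga_{1T},\ga_{2T}} \ll n^{2\l-v_{\ga_{1T},\ga_{2T}}+1}$, and dividing by $N^{v_{\ga_1,\ga_2}}$ produces exactly $n^{2\l-v_{\ga_1,\ga_2}+2}/N$; larger $|T|$ terms are damped by extra factors of $n/N \le 1$.

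The main obstacle will be the case analysis tracking the connectivity of the combined walk $\ga_{1T} \cup \bar{\ga}_{2T}$: when $v_{\ga_1 \cap \ga_2} \ge 1$ every gluing preserves the existing shared vertex and Lemma \ref{le:1}'s sharper one-component bound applies uniformly, but when $v_{\ga_1 \cap \ga_2} = 0$ one must separately treat gluings that bridge $V_{\ga_1}$ and $V_{\ga_2}$ (which restore connectivity and again admit Lemma \ref{le:1}) versus those internal to a single side (for which only the weaker two-component bound of Lemma \ref{le:2} is available). Combining these sub-cases, then using the crude count $|\{T : |T| = t\}| \le (2\l)^{2t}$ together with the hypothesis $\l^2 \le N/8$ to collapse the resulting geometric series, will yield the claimed error.
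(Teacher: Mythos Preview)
Your plan is exactly the paper's intended argument (``very similar to Lemma~\ref{4:le:3} \ldots\ we omit the details''), and for $v_{\ga_1\cap\ga_2}\ge 1$ it works: every gluing $T$ keeps a shared vertex, Lemma~\ref{le:1} supplies $W_{\ga_{1T},\ga_{2T}}\ll n^{2\l-v_{\ga_{1T},\ga_{2T}}+1}$, and the inclusion--exclusion collapses to $O_\l(n^{2\l-\vgu+2}/N)$ just as in Lemma~\ref{4:le:3}.

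The gap is in the case $v_{\ga_1\cap\ga_2}=0$ with a \emph{purely internal} gluing, say $T=\{(a,b)\}$ with $a,b\in V_{\ga_1}$. After gluing, the pair $(\ga_{1T},\ga_{2T})$ still has disjoint vertex sets, so the only available bound is the product
\[
W_{\ga_{1T},\ga_{2T}}=W_{\ga_{1T}}W_{\ga_2}\le n^{\,2\l-v_{\ga_{1T},\ga_{2T}}+2}
\]
with exponent $+2$ rather than $+1$. Feeding this into your inclusion--exclusion gives a contribution of order $n^{2\l-\vgu+3}/N$, one factor of $n$ too large; your final sentence asserts the sub-cases combine to the claimed error, but this particular sub-case does not. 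In fact the extra factor is genuine, not an artefact of the method: for $\l=2$ with $\ga_1\colon 1\!\to\!2\!\to\!1$ and $\ga_2\colon 3\!\to\!4\!\to\!3$ (so $\vgu=4$, $\Wgu=n^2$), a direct calculation gives
\[
\E(\om_{\ga_1}\overline{\om_{\ga_2}},\OpI)=n^2-\tfrac{2n^3}{N}+O_\l\!\left(\tfrac{n^2}{N}\right),
\]
so the error is $\Theta(n^3/N)$, not $O(n^2/N)$. What your argument actually establishes is the weaker bound $O_\l(n^{2\l-\vgu+3}/N)$ in the disjoint case, and this is still sufficient everywhere Lemma~\ref{4:le:4} is invoked (in particular for the term $A$ in the proof of Lemma~\ref{le:5}, where an extra $1/N$ absorbs the surplus $n$).
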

The proof of Lemma \ref{4:le:4} is very similar to that of Lemma \ref{4:le:3}, by using the inclusion-exclusion principle to translate from the set $\OpI$ to $\Op$. For the sake of simplicity, we omit the details.

\section{Conclusion}\label{sec:5}
In this paper, we investigate conditions under which linear codes possess the group randomness property with respect to Wigner's semicircle law. This is analogous to previous work on the group randomness of linear codes with respect to the Marchenko-Pastur law. Several interesting questions arise during the course of writing this paper, and we hope to stress these questions in the future.

\begin{enumerate}

\item While we have proved the convergence in probability in Theorem \ref{thm:SC}, our numerical experiments seem to indicate that the convergence is quite fast with respect to $n$, the length of the codes. Can one prove something substantial, say a rate of convergence in probability in the order of $O\left(n^{-\epsilon}\right)$ for some $\epsilon >0$? This question also remains interesting for the group randomness of linear codes with respect to the Marchenko-Pastur law.

\item How about other group randomness properties for linear codes, and how these properties may reflect the algebraic properties of the underlying codes? There has been some very interesting recent work on pseudo-Wigner matrices from linear codes \cite{Sol, Sol2}, and these may lead the door open for further investigations.




\end{enumerate}

\subsection*{Acknowledgment}
The research of M. Xiong was supported by RGC grant number 16303615 from Hong Kong.

\end{document}